\newcommand\vldbdoi{XX.XX/XXX.XX}
\newcommand\vldbpages{XXX-XXX}
\newcommand\vldbvolume{17}
\newcommand\vldbissue{11}
\newcommand\vldbyear{2024}
\newcommand\vldbauthors{\authors}
\newcommand\vldbtitle{\shorttitle} 
\newcommand\vldbavailabilityurl{https://github.com/Emory-AIMS/priv_traj_gen}
\newcommand\vldbpagestyle{empty} 
\newcommand{\partitle}[1]{\smallskip \noindent \textbf{#1.}}
\begin{document}

\title{HRNet: Differentially Private Hierarchical and Multi-Resolution Network for Human Mobility Data Synthesization}

\author{Shun Takagi *}\thanks{* Work partially done while visiting Emory University}
\affiliation{%
  \institution{Kyoto University}
}
\email{shun021677@gmail.com}

\author{Li Xiong}
\affiliation{%
  \institution{Emory University}
}
\email{lxiong@emory.edu}

\author{Fumiyuki Kato}
\affiliation{%
  \institution{Kyoto University}
}
\email{fumilemon79@gmail.com}

\author{Yang Cao}
\affiliation{%
  \institution{Tokyo Institute of Technology}
}
\email{cao@c.titech.ac.jp}

\author{Masatoshi Yoshikawa}
\affiliation{%
  \institution{Osaka Seikei University}
}
\email{yoshikawa-mas@osaka-seikei.ac.jp}


\begin{abstract}
Human mobility data offers valuable insights for many applications such as urban planning and pandemic response, but its use also raises privacy concerns. In this paper, we introduce the Hierarchical and Multi-Resolution Network (HRNet), a novel deep generative model specifically designed to synthesize realistic human mobility data while guaranteeing differential privacy. 
We first identify the key difficulties inherent in learning human mobility data under differential privacy. In response to these challenges, HRNet integrates three components: a hierarchical location encoding mechanism, multi-task learning across multiple resolutions, and private pre-training. 
These elements collectively enhance the model's ability under the constraints of differential privacy. 
Through extensive comparative experiments utilizing a real-world dataset, HRNet demonstrates a marked improvement over existing methods in balancing the utility-privacy trade-off.
\end{abstract}

\maketitle

\pagestyle{\vldbpagestyle}
\begingroup\small\noindent\raggedright\textbf{PVLDB Reference Format:}\\
\vldbauthors. \vldbtitle. PVLDB, \vldbvolume(\vldbissue): \vldbpages, \vldbyear.\\
\href{https://doi.org/\vldbdoi}{doi:\vldbdoi}
\endgroup
\begingroup
\renewcommand\thefootnote{}\footnote{\noindent
This work is licensed under the Creative Commons BY-NC-ND 4.0 International License. Visit \url{https://creativecommons.org/licenses/by-nc-nd/4.0/} to view a copy of this license. For any use beyond those covered by this license, obtain permission by emailing \href{mailto:info@vldb.org}{info@vldb.org}. Copyright is held by the owner/author(s). Publication rights licensed to the VLDB Endowment. \\
\raggedright Proceedings of the VLDB Endowment, Vol. \vldbvolume, No. \vldbissue\ %
ISSN 2150-8097. \\
\href{https://doi.org/\vldbdoi}{doi:\vldbdoi} \\
}\addtocounter{footnote}{-1}\endgroup

\ifdefempty{\vldbavailabilityurl}{}{
\vspace{.3cm}
\begingroup\small\noindent\raggedright\textbf{PVLDB Artifact Availability:}\\
The source code, data, and/or other artifacts have been made available at \url{\vldbavailabilityurl}.
\endgroup
}

\section{Introduction}
\label{sec:introduction}
In recent years, the use of human mobility data has gained significant attention for its potential to contribute to societal benefits, such as traffic forecasting, urban planning, and pandemic response, including COVID-19 spread analysis \cite{wang2020deep, chang2021mobility, lin2019deepstn+}.
However, it also raises critical privacy concerns even if the data are aggregated and anonymized~\cite{tu2018new, pellungrini2018analyzing}.

Differential Privacy (DP)~\cite{dwork2006differential} has emerged as the leading standard for maintaining data privacy. 
DP offers a strong, mathematically grounded privacy guarantee without relying on restrictive assumptions about potential adversaries. 
The core principle of DP is to ensure that output from the data analysis does not substantially differ (bounded by a privacy parameter or privacy budget), regardless of whether any specific individual's data is included or excluded from the dataset. 
The broad applicability and growing adoption of DP in the real world~\cite{erlingsson2014rappor,apple,ding2017collecting,johnson2018towards}, as well as its endorsement by the US Census Bureau~\cite{bureaumap}, signify its importance and effectiveness.

\begin{figure}[t]
    \centering
    \includegraphics[width=.9\linewidth]{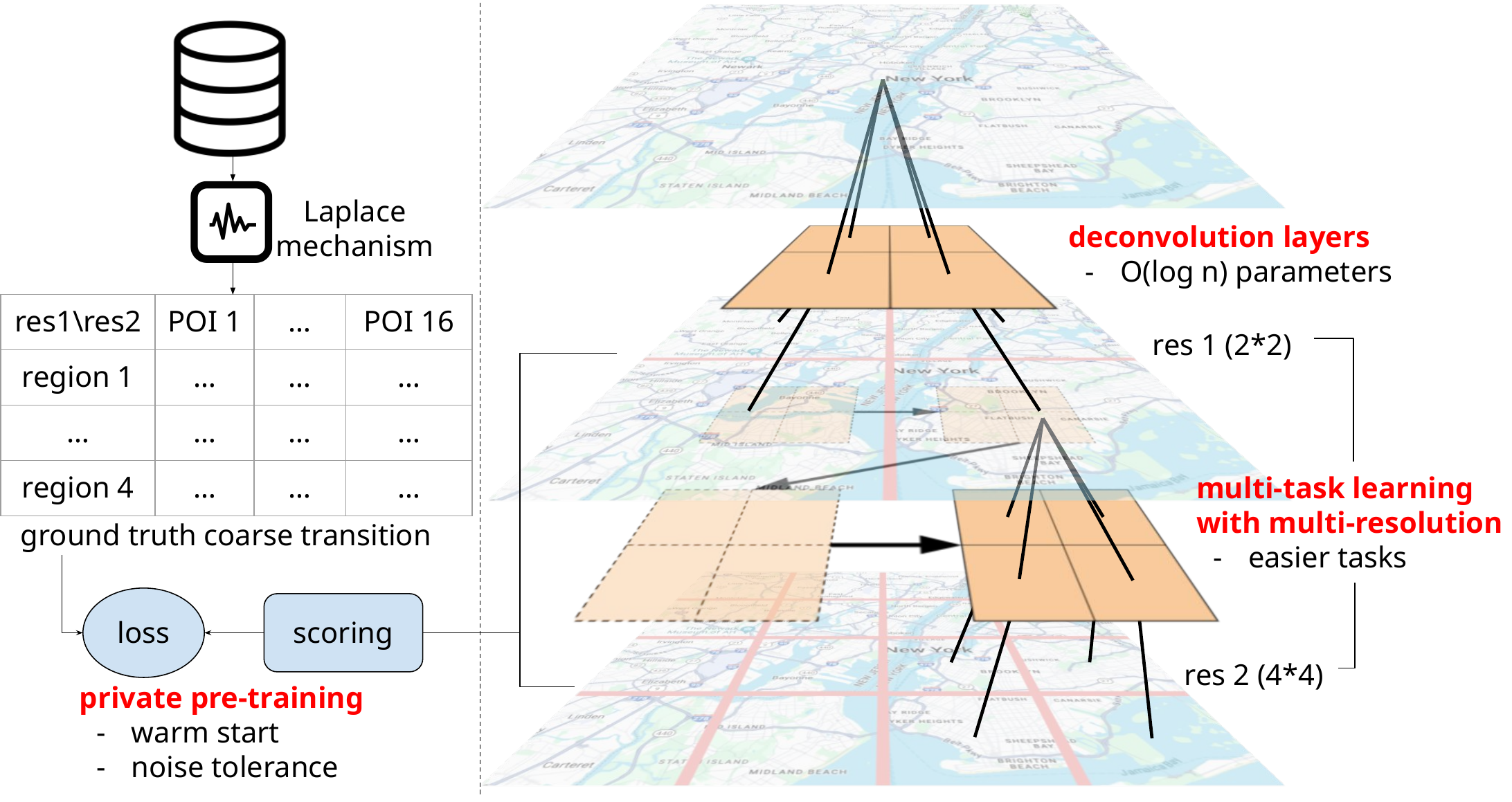}
    \caption{HRNet utilizes three novel components to address the bottlenecks of applying DP-SGD in human mobility generation: 1) a hierarchical location encoding mechanism using deconvolution networks, 2) multi-task learning across multiple resolutions, and 3) private pre-training using a DP coarse transition matrix. 
    }
    \label{fig:overview}
\end{figure}

In this paper, we study the problem of DP data synthesis~\cite{zhang2017privbayes, bowen2020comparative, wang2021dpsyn} for human mobility data.
DP data synthesis generates synthetic data that possesses statistical properties similar to the real data while ensuring DP. 
According to the post-processing theorem of DP, unlimited analysis can be performed on this synthetic data without introducing further privacy concerns.
Our goal is to synthesize human mobility data, defined as a sequence of locations, to closely resemble the real human mobility data, which can be then used for a variety of downstream tasks in previously mentioned applications.

Existing DP data synthesis methods for human mobility primarily rely on DP-aware data structures, which can be categorized into tree-based \cite{chen2012differentially, he2015dpt, cai2021trajectory}, Markov-based \cite{wang2023privtrace, gursoy2018differentially}, and clustering approaches \cite{liu2021differentially, hua2015differentially, li2017achieving}. 
These methods generally employ generalization to reduce dimensionality and sensitivity, thereby balancing utility and privacy, but often at the cost of information loss.
For instance, clustering approaches only generate transitions between $k$ centroids by location generalization using clustering, leading to a loss of precise transition information within the clusters.
Moreover, the state-of-the-art Markov-based method~\cite{wang2023privtrace} does not account for correlations beyond two-step intervals, and many methods overlook auxiliary information such as time, which is crucial in human movement.
Tree-based methods aim to minimize such loss by constructing a data-dependent tree with a portion of the privacy budget. 
However, such consumption leads to a lack of privacy budget during the data generation phase, which results in diminished utility.

MTNet~\cite{wang2022deep} is a deep learning approach designed for generating short trips on road networks, such as taxi trips, while ensuring DP. 
By utilizing meta-information from road networks, which is not sensitive, MTNet adeptly manages the utility-privacy trade-off.
However, this approach imposes the inherent limitation of restricting mobility to road networks.
Human mobility, in reality, is not confined to such networks; it encompasses various modes including subway travel and movement through non-road spaces like parks. 
Additionally, even when focused on road networks, human mobility often evolves into lengthy sequences of road segments, a complexity that poses significant challenges under DP constraints. Consequently, adapting MTNet for a broader spectrum of human mobility appears to be an unpromising approach.

To the best of our knowledge, this is the first paper to adopt a deep learning approach with DP for generating human mobility which is not constrained to road networks.
Recent advancements have shown that deep learning approaches handling human mobility data outperform traditional models~\cite{wang2020deep, LucaBLP23-Survey-Mobility}. There are several generative models such as adversarial network (GAN) based models for generating realistic synthetic trajectories \cite{zhang2023csgan, feng2020learning}.
However, these models do not ensure formal privacy guarantee.
Theoretically, it is possible to ensure DP for any deep learning models by replacing the optimization method with differentially private stochastic gradient descent (DP-SGD)~\cite{abadi2016deep}.
Unfortunately, simply adopting DP-SGD on these generative models results in either poor utility or infeasibly high privacy cost due to the large space domain and complexity of the models.
More concretely, we elaborate the two primary challenges as follows. 
\begin{enumerate}
    \item  \textbf{The number of model parameters:} It has been demonstrated~\cite{bassily2014private, bassily2020stability} that DP-SGD inevitably increases the lower bound of empirical risk as the number of parameters increases. In most models about human mobility (e.g., RNN model~\cite{yang2020location} and attention model~\cite{yan2023spatio}), the number of model parameters depends on the size of the space domain. If we encode the space or Points of Interests (POIs) using an encoding scheme such as embedding matrices or linear embeddings, the number of model parameters is $O(n_{\rm POI})$  where $n_{\rm POI}$ denotes number of POIs, which can be significantly large. 
    \item \textbf{Learning difficulty:} the difficulty of learning human mobility escalates with  increasing $n_{\rm POI}$ due to the increasing number of labels for training. 
    Therefore, a larger $n_{\rm POI}$ necessitates larger models and longer training epochs to discern subtle POI differences.
    This leads to decreased utility due to the large number of parameters and higher consumption of privacy budgets due to the composition theorem~\cite{kairouz2015composition}.
\end{enumerate}

\partitle{Contributions}
In response to these challenges, we present the Hierarchical and Multi-Resolution Network (HRNet), which encompasses a novel network structure and learning methodology. 
The key features of HRNet, as illustrated in Figure~\ref{fig:overview}, are threefold:
\begin{enumerate}
    \item \textbf{Hierarchical location embedding:} to alleviate the first issue, HRNet adopts a hierarchical structure for location embedding via transposed convolution. This approach significantly reduces the number of parameters to $O(\log (n_{\rm POI}))$, thereby alleviating the decrease in utility commonly associated with traditional embedding methods that have $O(n_{\rm POI})$ parameters.
    
    \item \textbf{Multi-task learning with multiple resolutions:} to alleviate the second issue, we introduce multi-task learning. Our hierarchical network design enables encoding of multiple resolutions. 
    Beyond mastering the primary task, the network concurrently engages in learning additional, less complex tasks with coarser resolutions. 
    This approach allows for the intricate primary task to be deduced from these simpler sub-tasks, thereby efficiently mitigating the learning difficulties associated with a large number of POIs.
    
    \item \textbf{Effective and private pre-training:} to alleviate both issues, we conduct private pre-training.
    The effectiveness of pre-training in addressing the first issue of DP-SGD has been underscored~\cite{amid2022public, kairouz2021nearly}.
    Additionally, pre-training provides a `warm start', effectively reducing the need for extensive training epochs, thus alleviating the second issue.
    A common challenge with pre-training is its reliance on public data, which may not always be readily available or suitable for all scenarios.
    HRNet enables efficient pre-training without public data by leveraging a dense DP-compatible transition matrix.
\end{enumerate}


\section{Preliminaries}

\subsection{Differential Privacy}
\label{sec:differential-privacy}
Differential Privacy (DP), as introduced by Dwork et al. \cite{dwork2006differential}, provides a robust mathematical framework for quantifying privacy leakage in data publication scenarios. 
The formal definition of DP is as follows:
\begin{definition}[($\varepsilon, \delta$)-Differential Privacy]
Consider a dataset domain $\mathcal{D}$ and output domain $\mathcal{Z}$. Given $\varepsilon\in \mathbb{R}^+$ and $\delta\in [0,1]$, a randomized mechanism $m$, which randomly outputs $m(D)\in \mathcal{Z}$ with input $D\in\mathcal{D}$, satisfies ($\varepsilon, \delta$)-DP if, for any two datasets $D, D' \in \mathcal{D}$ differing in at most one element, and for any subset of outputs $Z \subseteq \mathcal{Z}$, the following inequality holds:
\begin{equation}
\nonumber
\Pr[m(D)\in Z] \leq e^\varepsilon \Pr[m(D')\in Z] + \delta.
\end{equation}
\end{definition}

\textbf{DP-SGD:}
Differentially Private Stochastic Gradient Descent (DP-SGD), as developed by Abadi et al. \cite{abadi2016deep}, adapts the conventional SGD optimization process for deep learning models to satisfy DP. 
In traditional SGD, parameters of a model $\theta$ are iteratively updated to minimize empirical risk, with gradient computations performed using sampled data (minibatches). 
DP-SGD modifies this process to ensure DP by introducing gradient clipping and noise addition steps. 
Specifically, the $l_2$ norm of the gradient is first clipped to limit sensitivity, followed by the addition of the Gaussian noise to the averaged clipped gradients. 
The model parameters are then updated using these randomized gradients, similar to conventional SGD.
This process continues until convergence is achieved or the allocated privacy budget is depleted. 
\textcolor{black}{
From a DP perspective, this process is interpreted as the sequential composition of the Gaussian mechanism with subsampling (i.e., minibatch). Therefore, by employing composition techniques, the final published model satisfies DP. 
We adopt numerical composition, which numerically computes the upper bound of the privacy loss parameter $\varepsilon$ using privacy loss distributions~\cite{sommer2018privacy}. 
Several studies enhance its computation using techniques such as the Fast Fourier Transform~\cite{koskela2020computing} and the "connect-the-dots" algorithm~\cite{dockhorn2022differentially}. 
For implementation, we utilize Google’s differential-privacy library\footnote{\url{https://github.com/google/differential-privacy} (Accessed: July 14, 2024)}.
}

\subsection{Problem Formulation}
\label{sec:problem formulation}
This study aims to generate synthetic human mobility data that accurately mirrors actual human mobility (e.g., daily human mobility such as home to workplace to home). 
Due to the inherently continuous nature of human mobility in terms of both time and geographical coordinates (latitude and longitude), this presents significant challenges in direct modeling and evaluation.
To address this, we adopt a conventional approach of discretizing human mobility data, thereby simplifying the modeling process. 
This section first formulates discretized human mobility representation. 
It then details the formulation of a generator for discretized human mobility and outlines the evaluation methodology.

\subsubsection{Trajectory:}
\label{sec:staypoint}
As a fundamental step, we define Points of Interests (POIs)\footnote{Here, the term `POI' is not used in its original sense of indicating semantic locations. However, our method can be applied in its original context as described in Section~\ref{sec:loc_assign}.} as uniformly distributed grid cells on a map, with each grid cell assigned a unique integer ID. 
For instance, if a target map is divided into a $w \times w$ grid, the domain of POIs is defined as $L := [w^2]$, where $[n]$ denotes the set $(1,2,\dots,n)$. 
That is, $n_{\rm POI}=|L|$.
As illustrated in Figure~\ref{fig:overview}, different levels of discretization are possible, such as $1 \times 1$, $2 \times 2$, and $4 \times 4$.
A geographical location, defined by its latitude and longitude, is transformed into an integer representing the grid cell that encompasses the location.
The mapping is performed such that the grid cell located at $(\lfloor l/w\rfloor,l\% w)$ is assigned the integer $l\in L$.
We then represent a human mobility sequence as a sequence of stay points, denoted as
$
\mathbf{v}=[v_1,v_2,\dots],
$
where $i\in \mathbb{N}, v_i = (l_i, t_i)$, $l_i\in L$.
Note that this representation aligns POIs that change from the previous POI so that $l_i \neq l_{i+1}$.
Here, $t_i$ is the index of the time slot that includes the corresponding time.
Time slots are made by discretizing entire time by $n_{\rm time}$, where $n_{\rm time}$ is the number of time slots.
For example, $v_2$ indicates the tuple of the second POI visited and the time at which this POI was reached. 
From this point forward, the term `trajectory' will refer to this discretized representation of human mobility.

\subsubsection{Generator:}
In our study, we employ a neural network-based generator, denoted as $G_\theta$, to stochastically generate synthetic trajectories.
The process can be mathematically represented as
$
\mathbf{v} \sim G_\theta,
$
where $\mathbf{v}$ represents the generated synthetic trajectory, and $\theta$ symbolizes the trainable parameters of the neural network.
A detailed example of such a generator, including its architecture and operational mechanics, will be discussed in Section~\ref{sec:baseline}.

\subsubsection{Evaluation:}
Direct evaluation of the performance of $G_\theta$ is intractable due to the inherent sparsity of trajectory data.
To overcome this, we adopt an indirect evaluation approach, focusing on various statistical properties of trajectories. 
These statistical properties are computed from the dataset generated by $G_\theta$ and are then compared with empirical statistics derived from the provided trajectory dataset $D$.
Key aspects of trajectory data that we examine include waypoints, routes, destinations, transitions, and travel distances. 
By analyzing these specific statistics, we can gain insights into the accuracy and fidelity of the trajectories generated by $G_\theta$.
The formal definitions and methodologies for computing these statistics are  detailed in Section~\ref{sec:exp_setup}. 
This approach enables a thorough and nuanced evaluation of the generator's performance in replicating realistic human mobility patterns.

\subsection{The Baseline Generator}
\label{sec:baseline}
This section delineates the architecture and operational mechanics of the baseline generator used in our study. 

\subsubsection{The baseline generator}
The baseline generator is structured upon the chain rule of conditional probabilities:
\begin{equation}
\label{eq:chain}
G_\theta=\Pr(\mathbf{v}) = \prod_{i=1}^{\mathbf {|v|}} \Pr(v_i|v_1,...,v_{i-1}).
\end{equation}
This formulation allows the generator to sequentially construct a trajectory from $i=1$ by modeling each conditional probability $\Pr(v_i|v_1,\dots,v_{i-1})$.
To accomplish this, the generator integrates three core components: embedding matrices for data points $v_i$, a recurrent neural network (RNN) to process sequences, and a scoring component to make a distribution.

\textbf{Embedding Matrix:}
The embedding matrix translates each data point $v_i= (l_i, t_i)$ to a vector representation.
Two separate embedding matrices are employed: one for the POI ($l_i$) and another for time ($t_i$). 
The POI embedding matrix $M_{\rm POI}$ contains $|L|$ trainable vectors, whereas the time embedding matrix $M_{\rm time}$ includes $n_{\rm time}$ trainable vectors.
The vector representation of $v_i$ is formed by concatenating the $t_i$th vector from the time embedding matrix with the $l_i$th vector from the POI embedding matrix:
\begin{equation}
\label{eq:baseline_encode}
{\rm encode}(v_i)=[M_{\rm time}[t_i], M_{\rm POI}[l_i]].
\end{equation}
Here, $M[i]$ represents the access to the vector of $M$ at $i$th index and $[a,b]$ represents the concatenation of vectors $a$ and $b$.

\textbf{Recurrent Neural Network:}
For encoding trajectory prefixes $(v_1, \dots , v_{i-1})$, we employ Gated Recurrent Units (GRUs)~\cite{cho-etal-2014-learning}.
This is because we have empirically found that GRU is slightly more effective than alternative methods such as LSTM~\cite{hochreiter1997long} and attention mechanisms~\cite{vaswani2017attention} in the context of DP-SGD.
We attribute this to the GRU's simpler architecture and reduced parameters.
The GRU cell updates its state based on the previous state $h_{i-1}$ and the encoded $v_{i-1}$:
$
h_i = f_{\rm GRU}(h_{i-1}, {\rm encode}(v_{i-1})).
$
This process is applied recursively to encode the entire trajectory prefix.

\textbf{Scoring component:}
The scoring component is responsible for converting the encoded prefix into the probability distribution over $[n_{\rm POI}]\times [n_{\rm time}]$ for next predicted location.
This conversion is accomplished using feed-forward neural networks $g_{\rm POI}:\mathbb{R}^{n_{\rm hidden}}\to\mathbb{R}^{|L|}$ and $g_{\rm time}:\mathbb{R}^{n_{\rm hidden}}\to\mathbb{R}^{n_{\rm time}}$ where $n_{\rm hidden}$ is the dimensionality of the hidden state $h_i$, followed by a softmax function to ensure proper probabilistic normalization. 
The probability distribution over spatial location $l\in L$ given the hidden state $h_i$ is computed as
$\Pr(l=l_i|h_i)={\rm softmax}(g_{\rm POI}(h_i))[l_i].$
Similarly, the probability distribution over time $t\in [n_{\rm time}]$ given $h_i$ is determined by
$\Pr(t=t_i|h_i)={\rm softmax}(g_{\rm time}(h_i))[t_i].$
Ultimately, the conditional probability $\Pr(v_i|v_1,\dots,v_{i-1})$ given the previous observations is the product of these two distributions, representing a joint probability over POI and time:
$
\Pr(v_i|v_1,\dots,v_{i-1})=\Pr(l=l_i|h_i)\Pr(t=t_i|h_i).
$

\subsubsection{The objective function:}
\label{sec:objective function}
To optimize the baseline generator, given a training trajectory sequence $\mathbf{v}=(v_1,\dots,v_i)$, we employ cross entropy loss as our objective function. 
The loss is calculated as follows:
\begin{align}
&\nonumber{\rm loss}_{\rm POI}+{\rm loss}_{\rm time}=\\
&\nonumber\sum_{i=1}^{|v|}\left(\sum_{c\in L} \delta_{c,l_i}\log (\Pr(l=l_i|h_i))+\sum_{c\in [n_{\rm time}]} \delta_{c,t_i}\log (\Pr(t=t_i|h_i))\right),
\end{align}
where $\delta_{i,j}$ is the Kronecker delta.

\subsubsection{The sequential generation:}
The trained generator is capable of generating human mobility sequences by employing the chain rule as delineated in Equation~(\ref{eq:chain}). 
Specifically, the process begins with the initial POI.
Subsequently, each following POI is recursively sampled based on the conditional probability that is informed by the previously sampled POIs.
This procedure is iteratively executed until a predefined condition, such as reaching a maximum sequence length or a specific vocabulary signifies the end of the sequence.

\subsection{The two bottlenecks}
\label{sec:bottleneck}
The baseline model discussed earlier shows a decrease in performance when DP-SGD is applied, particularly as the number of POIs increases. 
In this section, we identify and elaborate on the two primary factors causing these performance bottlenecks.

\subsubsection{The number of parameters $|\theta|$}
Under the constraints of DP, it is known that the lower bound of empirical risk of the private model polynomially scales~\cite{bassily2014private, bassily2020stability} as the number of model parameters  increases.
In the baseline model, the size of the parameters  $\theta$ expands linearly with  increasing number of POIs ($n_{\rm POI}$) due to the location embedding matrix and the scoring component.
Consequently, the empirical risk would polynomially escalate as $n_{\rm POI}$ increases, which leads to worse utility of the generated human mobility.

\subsubsection{The number of labels}
With an increase in the number of class labels (i.e., $n_{\rm POI}$), the model faces challenges in discerning more subtle feature distinctions between classes (see Section~\ref{sec:objective function}). 
This often necessitates the use of larger model architectures or extended training epochs. 
However, such strategies are impractical under DP-SGD constraints. 
In addition to the increased empirical risk due to the increased model size as mentioned above, increasing the number of training epochs results in greater privacy loss due to the composition theorem of DP~\cite{kairouz2015composition}.
Therefore, learning difficulty caused by the large number of labels is a bottleneck of DP-SGD.

\textcolor{black}{
\subsubsection{Discussion: impact of the discretization parameter}
\label{sec:discussion_about_discretization}
In many deep learning methods~\cite{wang2023privtrace, lian2020geography, lim2022hierarchical}, as in this study, latitudes and longitudes are discretized using a grid. However, determining the parameters for this discretization (i.e., $w$ in this study) is not straightforward. 
This is because finer discretization makes the data more sparse and increases the number of parameters required for the network (location encoding).
As discussed above, DP-SGD leads to a decrease in accuracy in these cases. 
However, finer grids have the advantage of capturing more detailed movements between grid cells, improving the quality of the training data and potentially increasing accuracy.
Thus, there is a trade-off between the accuracy improvement from higher-quality training data and the accuracy decrease due to the noise required by DP.
Therefore, mitigating the two bottlenecks makes it less susceptible to the negative effects of high-resolution discretization and thereby improves accuracy.
}


\subsection{Transposed Convolution}
\label{sec:deconvolution}
In this section, we delve into the transposed convolutional operation, a key element of the main component in our proposed model. 
Originating from the domain of computer vision~\cite{long2015fully}, the transposed convolution layer, also known as a deconvolution layer, is typically employed to upsample or reconstruct the original images from feature maps generated by convolutional layers. 
In contrast to its conventional usage, our model uniquely incorporates the specific instance of this layer, which we have termed the `2D quad deconvolutional network'. 
This incorporation of the transposed convolution layer is instrumental in addressing the two primary bottlenecks identified earlier.

\subsubsection*{2D quad deconvolutional network}
Consider the matrix $M$ with shape $(w, w, n_{\rm dim})$, where $n_{\rm dim}$ denotes the number of channels, and each channel comprises a $w\times w$ square matrix.
Similarly, let the deconvolutional kernel be defined as a matrix of shape $(2,2,n_{\rm dim})$, containing $n_{\rm dim}$ kernels, and each of these kernels is $2\times 2$ matrix denoted by ${\rm kernel}_{k}$.
The 2D quad deconvolutional network expands the matrix $M$ to the matrix $M^\prime$ with dimensions $(2w,2w,n_{\rm dim})$ using a deconvolutional kernel. 
See Figure~\ref{fig:overview} for the running example. 
In the second layer, the kernel slides over the input of $(2\times 2)$ matrix to perform the deconvolution operation for the four grids, expanding a $2 \times 2$ matrix into a $4 \times 4$ matrix.

Let $x$ and $y$ be integers, the transformation can be mathematically represented as:
\begin{align}
\nonumber M^\prime[2x + \delta_x, 2y + \delta_y, k] = \sum ^{n_{\rm dim}}_{k^\prime=1}{\rm kernel}_{k}[\delta_x, \delta_y, k^\prime]  M[x, y, k^\prime],
\end{align}
where, $k\in [n_{\rm dim}]$, $\delta_x$ and $\delta_y$ are binary values ($0$ or $1$), and notation $M[i, j, k]$ represents access to an element of the matrix $M$ with the index $(i,j,k)$. 
This process effectively quadruples each $n_{\rm dim}$-dimensional vector in the matrix. 
As a result, the original $n_{\rm dim}$-dimensional vector $M[x,y]$ is expanded into four distinct $n_{\rm dim}$-dimensional vectors: $M^\prime[2x, 2y]$, $M^\prime[2x+1, 2y]$, $M^\prime[2x, 2y+1]$, and $M^\prime[2x+1, 2y+1]$.
Applying the 2D quad deconvolutional layer $d$ times multiplies the number of $n_{\rm dim}$-dimensional vectors by $4^d$. 
When $w$ is a power of $2$, originating from a single $n_{\rm dim}$-dimensional root vector $\theta_{\rm root}$, the final transformation of the network can be represented as:
$$
M = {\rm deconv}^{\log_4 w^2}(\theta_{\rm root}),
$$
where ${\rm deconv}^a$ denotes the iterative application of the deconvolution operation $a$ times.

\begin{figure}[ht]
    \centering
    \includegraphics[width=\linewidth]{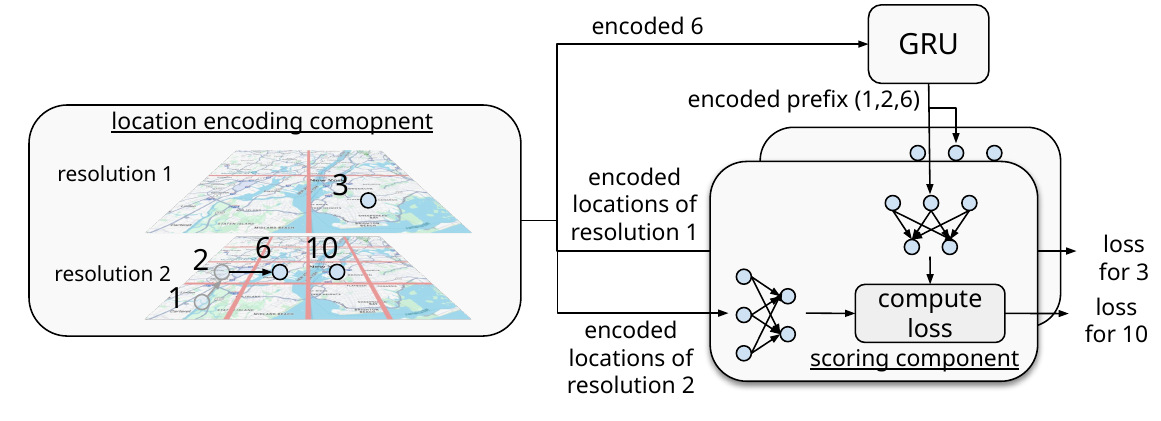}
    \caption{Overview of multi-task training with  hierarchical location encoding. In this example, we assume that $w=4$, so we have 4*4 POIs at  resolution 2 and 2*2 regions at resolution 1 due to the hierarchical location encoding. Given prefix $(1,2,6)$, the model learns to infer grid cell $10$ at resolution $2$, as well as grid cell $3$ at resolution $1$.}
    \label{fig:overview of training}
\end{figure}

\section{HRNet}
In this section, we introduce the Hierarchical and Multi-Resolution Network (HRNet). 
HRNet's core novelty is the integration of a hierarchical location encoding component, which supersedes the traditional embedding matrix and scoring component. 
This design reduces the number of parameters, effectively alleviating the first bottleneck. 
Moreover, it facilitates multi-resolution interpretation, enabling multi-task learning. 
This approach allows HRNet to process and learn from data at multiple resolutions simultaneously, adeptly handling the difficulty of learning at the finest resolution, which alleviates the second bottleneck. 
Together, they contribute to HRNet's robustness to the number of POIs $n_{\rm POI}$ when optimized with DP-SGD.

We begin by outlining HRNet, using a simple example illustrated in Figure~\ref{fig:overview of training}. 
Subsequently, we delve into detailed explanations of each component.

\subsection{Overview}
We provide the overview of multi-task training with hierarchical location encoding of HRNet, illustrated in Figure~\ref{fig:overview of training}. 
HRNet builds upon the baseline model that models conditional probabilities. 
See Section~\ref{sec:baseline} for detail of the baseline model. 
Note that we simplify our explanation, sidelining the training of time information which is the same as the baseline, to emphasize the principal distinctions.

Consider a scenario depicted in Figure~\ref{fig:overview of training}. 
The example involves two resolutions, featuring $4$ and $16$ grid cells respectively, and trajectory $(1,2,6,10)$ at resolution $2$. 
In the learning phase for $v=(1,2,6,10)$, step $1$ is learning $2$ from prefix ($1$), step $2$ is learning $6$ from prefix $(1, 2)$, and step $3$ is learning $10$ from prefix $(1,2,6)$. Consider step 3, HRNet operates as follows:

The hierarchical location encoding component encodes location $6$.
The GRU cell computes $h_3$, embedding of the prefix $(1,2,6)$, using $h_2$ and the encoded vector of $6$. 
$h_3$ forms a \textit{query vector} with the feed-forward neural networks for subsequent steps.
Unlike the baseline model that focuses exclusively on resolution $2$, HRNet concurrently considers resolution $1$, that is, learning $3$ at resolution $1$ from $(1,2,6)$. 
For each resolution, the model encodes all grid cells and convert them to \textit{key vectors} with feed forward neural networks.
It then calculates scores for the next location given the prefix $(1,2,6)$ by performing a dot product with the query vector. 
This operation generates a probability distribution for each resolution (i.e., over $[4]$ and $[16]$), using the softmax function.
The model computes the cross-entropy loss based on this probability distribution and the actual value, i.e., $3$ at resolution $1$, 
and $10$ at resolution $2$.
Finally, the model calculates the gradients of the parameters $\theta$ from the summed cross-entropy losses. It then employs DP-SGD for model update. 

\subsection{Hierarchical Location Encoding}
HRNet introduces a novel approach in its location encoding mechanism, which has smaller number of parameters than the baseline model. 
The key feature is the employment of a hierarchical network structure with 2D quad deconvolutional network.

\subsubsection{The structure}
\label{sec:loc_arc}
The core of HRNet's location encoding is a trainable root vector, denoted as $\theta_{\rm root}$.
This root vector is augmented with a 2D quad deconvolutional network comprising of $d$ layers. 
The process involves the application of this 2D quad deconvolutional network to $\theta_{\rm root}$, producing $4^d$ distinct vectors. 
Essentially, this operation results in an embedding matrix of dimensions $(2^d,2^d,n_{\rm dim})$:
$
M_d={\rm deconv}^d(\theta_{\rm root}).
$
For a detailed explanation of the ${\rm deconv}$ operation, refer to Section~\ref{sec:deconvolution}.
Notably, the number of parameters in each deconvolutional layer only depends on the kernel size and is independent of $n_{\rm POI}$, sustaining a constant number of parameters. 
Consequently, this architecture requires only $O(d) = O(\log n_{\rm POI})$ parameters to generate $n_{\rm POI}$ vectors, which is a significant reduction compared to the $O(n_{\rm POI})$ parameters typically needed in embedding matrix.

\subsubsection{Vector assignment}
\label{sec:loc_assign}
The assignment of the $4^d$ vectors generated by the hierarchical location encoding component to the $n_{\rm POI}$ POIs follows a specific methodology.
Given that the grid size, $w*w$, is a power of 4, we set $d = \log_4(w*w)$. 
This approach facilitates an alignment of vectors with the grid, described as:
$$
{\rm hiencode}(l)=M_d[\lfloor l/w\rfloor, l\% w],
$$
where $l\in L$.
This assignment method naturally incorporates the first law of geography~\cite{tobler1970computer}, suggesting that geographically proximate locations have similar embeddings. 
This proximity principle is a direct result of the deconvolutional operation, where a single parent vector generates four adjacent embeddings.
\paragraph{\textcolor{black}{Scattered POI assignment}}
\textcolor{black}{
The above assignment assumed that POIs are the grid cells according to a uniform grid as described in Section~\ref{sec:staypoint} for simplicity. 
Note that this grid assumption is not necessary and HRNet can work with scattered POIs as long as the number of POIs is equal to or less than $4^d$. 
As mentioned above, we should follow the first law of geography rather than randomly assigning vectors to the scattered POIs.
Thus, we propose a method for assigning scattered POIs to vectors in $M_d$ following the proximity principle. 
First, we create a grid that covers all scattered POIs. 
Here, we set the minimum $d$ so that the number of POIs is less than $4^d$, and $w=2^d$. 
Then, we compute the coordinates of the center of each grid cell (${\rm cell}_j$) and assign each POI (${\rm POI}_{i}$) to a unique cell so that the Euclidean distance $d({\rm POI}_i, {\rm cell}_j)$ is as small as possible.
This assignment can be formalized as a linear assignment problem with a cost matrix $C$, where
$$
C=\{d({\rm POI}_i, {\rm cell}_j)\}_{i\in{[n_{\rm POI}]},j\in{[4^d]}}.
$$
This optimization can be solved using the Jonker-Volgenant algorithm~\cite{Crouse2016aerospace}. Once each POI is assigned to a unique grid cell, we can use the vector assignment method described above. Since POIs are aligned based on geographical proximity, this embedding also incorporates the proximity principle of the deconvolutional network.
}


\subsection{Scoring}
HRNet employs an efficient scoring mechanism that combines the hierarchical location encoding with dot product operations to establish a probability distribution over POIs.

\subsubsection*{Formulation of the Scoring Process}
In this process, the encoded prefix sequence $(v_1, \dots, v_{i-1})$, represented by $h_{i-1}$, is transformed into a query vector. 
This transformation is executed using a feed-forward neural network, denoted as $f_{\rm query}$. 
Concurrently, the encoded POI, ${\rm hiencode}(l)$, is processed into a key vector through another feed-forward neural network, named $f_{\rm key}$. 
Mathematically, this can be represented as
$
{\rm query}=f_{\rm query}(h_i)
$
and
$
{\rm key}_l=f_{\rm key}({\rm hiencode}(l)).
$
Following these transformations, the score of $l$ as $v_{i}$ is determined by computing the dot product between the query and the key vector
$
{\rm score}_l={\rm query}\cdot {\rm key}_l.
$
The final step involves applying the softmax function to these scores across all POIs, resulting in a probability distribution over POIs:
$
\Pr(v_{i}=l_{i}|v_1,\dots,v_{i-1})={\rm softmax}({\rm score}_{L})[l_{i}],
$
where ${\rm score}_{L}=({\rm score}_1,\dots,{\rm score}_{|L|})$.


\subsection{Multi-resolution Learning}
\label{sec:multi-task training}
In this section, we address the challenge of learning difficulty associated with a large number of labels by redefining the objective function in HRNet. 
The solution lies in employing a multi-resolution interpretation, facilitated by the hierarchical location encoding component. 
The essence of our approach is to extend beyond the original task by incorporating multiple tasks across various resolutions.
This multi-resolution strategy entails not only focusing on the primary task at hand but also simultaneously considering tasks at coarser resolutions, which are easier than the primary task. 
By doing so, the model gains a broader understanding of the data from the coarser resolutions, allowing it to better manage the complexity that comes with a large label space even in the noise of DP.

\subsubsection{Multi-resolution Interpretation}
Here, we maintain the assumption that the number of grid cells, denoted as $w\times w$, is a power of $4$, with $w=2^d$. 
This forms the basis for our multi-resolution interpretation, which extends beyond the primary grid division of $2^d\times 2^d$ (as described in Section~\ref{sec:problem formulation}).

We consider grid divisions of $2^{i_{\text{res}}}\times 2^{i_{\text{res}}}$ at each resolution level $i_{\text{res}}\in \{1,2,\dots, d-1\}$. 
Here, the $i_{\text{res}}$th resolution includes $L_{i_{\text{res}}}$ grid cells, represented as:
$
i_{\rm res}\in [d], L_{i_{\rm res}}:=[4^{i_{\rm res}}].
$
The matrix $M_{i_{\text{res}}}$ (refer to Section~\ref{sec:loc_arc}) encodes these grid cells by applying the vector assignment in Section~\ref{sec:loc_assign}. 
The encoding for grid cell $l$ at resolution $i_{\text{res}}$ is thus:
$$
{\rm hiencode}_{i_{\rm res}}(l)=M_{i_{\rm res}}[\lfloor l/ 2^{i_{\rm res}}\rfloor ,l \% 2^{i_{\rm res}}],
$$
where $l \in L_{i_{\text{res}}}$. 
This encoding is then fed into the scoring component to generate a probability distribution over the grid cells at resolution $i_{\text{res}}$:
$$
\Pr(l_{i_{\rm res}}=l|v_1,\dots,v_{i-1})={\rm softmax}({\rm score}_{L_{i_{\rm res}}})[l],
$$
where $l\in L_{i_{\rm res}}$ and ${\rm score}_{L_{i_{\rm res}}}=({\rm score}_1,\dots,{\rm score}_{|L_{i_{\rm res}}|})$.
Note that we use the query vector that is computed by GRU, which is common in all the resolutions.

\subsubsection{The Objective Function}
Our novel training strategy leverages the multi-resolution interpretation, incorporating multiple tasks across different resolutions. 
For a given grid cell $y$ in $L_d$ (the POI at the primary task level), the goal is to learn this grid cell and simultaneously learn its corresponding grid cells at lower resolutions $i_{\text{res}}$, where $i_{\text{res}} < d$.
That is, this approach introduces additional $d-1$ tasks into our training.

Each task at resolution $i_{\text{res}}$ comes with its own loss function:
$$
{\rm loss}_{i_{\rm res}}= \sum_{l\in L_{i_{\rm res}}} \delta_{l,{\rm up}_{i_{\rm res}}(y)} \log (\Pr(l_{i_{\rm res}}=l|v_1,\dots,v_{i-1})),
$$
where ${\rm up}_{i_{\rm res}}(y)\in L_{i_{\rm res}}$ is the grid cell covering $y$ at resolution $i_{\text{res}}$.
$\Pr(l_{i_{\rm res}}=l|v_1,\dots,v_{i-1})$ represents the inferred probability of the next grid cell being $l$ at resolution $i_{\rm res}$. 
These losses for all resolutions are then summed up, and this sum is used to compute the gradient for backward propagation.

Optimizing these loss functions facilitates the determination of the probability distribution for the next grid cell not only at the primary resolution $d$ but also at each additional resolution $i_{\rm res}$.
Moreover, the tasks defined at smaller $i_{\rm res}$ are inherently simpler due to the reduced number of labels $|L_{i_{\rm res}}|$.
Inferring finer cells from well-trained coarser resolution due to the simplicity, guided by the first law of geography, assists in mitigating the learning difficulty associated with the primary task.

\section{Private Pre-training}
\label{sec:pre-training}
Pre-training using publicly available data has been recognized as an effective strategy to mitigate the limitations of DP-SGD~\cite{amid2022public}.
However, the reliance on publicly accessible data is often a significant constraint, as such data may not always be available. 
To address this challenge, we introduce a novel private pre-training methodology that utilizes a DP compliant transition matrix, eliminating the need for public data.
This private pre-training strategy not only provides a `warm start' to accelerate the model's convergence but also enhances the model's capability to reduce empirical risk.

\subsection{DP Transition Matrix}
In our approach, we utilize a first-order transition matrix as ground truth for pre-training. 
The core task in this context is to predict the distribution of the next POI from a given POI. 
However, when dealing with a large number of POIs $n_{\rm POI}$, a direct POI-to-POI first-order transition matrix becomes excessively sparse. 
To avoid this issue, we leverage the multi-resolution interpretation of HRNet.
Instead of a direct POI-to-POI transition, we consider transitions from a broader region, specifically a grid cell at a coarser resolution $l_{i_{\rm res}}\in L_{{i}_{\rm res}}$, to a POI $l\in L_d$. 
\begin{equation}
\label{eq:tran}
{{\rm TRAN}_{i_{\rm res}}}[l_{i_{\rm res}}, l]=\sum_{\mathbf{v}\in D} \mathbf{1}_{V_{l_{i_{\rm res}},l}}(\mathbf{v})/|\mathbf{v}|,
\end{equation}
where $\mathbf{1}$ is an indicator function and $V_{l_{i_{\rm res}},l}$ is a subset of trajectory universe $\mathcal{V}$, which includes trajectories that have transition from $l_{i_{\rm res}}$ to $l$:
$
V_{l_{i_{\rm res}},l}:=\{\mathbf{v}\in\mathcal{V}|\exists i, {\rm up}_{i_{\rm res}}(\mathbf{v}[i])=l_{i_{\rm res}}\ \&\ \mathbf{v}[i+1]=l\}.
$
Note that this calculation normalizes the count with the sequence length $|\mathbf{v}|$, to limit sensitivity to $1$ for DP. 
To ensure DP, we introduce Laplace noise into the matrix as follows:
\begin{equation}
\label{eq:dptran}
{{\rm DPTRAN}_{i_{\rm res}}}[l_{i_{\rm res}}, l]={{\rm TRAN}_{\rm i_{\rm res}}}[l_{i_{\rm res}}, l]+{\rm Laplace}(\varepsilon),
\end{equation}
where ${\rm Laplace}(\varepsilon)$ is a Laplace noise that leads to $\varepsilon$-DP when sensitivity is $1$.
This approach results in a denser transition matrix, hence more informative and less noisy, which is useful in pre-training. Figure 1 left side shows a sample of such coarse transition matrix.

\subsection{Pre-training}
This section outlines the pre-training methodology using the DP transition matrix. 
During this phase, the core objective is to generate a probability distribution depicting the likelihood of transitions from a given region (a grid cell at resolution $i_{\rm res}$) to the POIs.
It is important to note that while both the hierarchical location encoding component and its scoring component undergo pre-training, the prefix encoding component (GRU) (see Figure 2) does not since the DP transition matrix does not contain prefix information.

\subsubsection{The Architecture}
The architecture for the pre-training phase incorporates a temporary substitute for GRU, necessitated by the absence of prefix information in the DP transition matrix. 
This temporary component is designed to mimic the function of the GRU, generating a query vector with dimensions identical to those of an encoded prefix.
Conceptually, this query vector symbolizes the initial grid of the transition, as represented by the following equation:
\begin{equation}
\label{eq:temporal_component}
h={\rm temp}(x),
\end{equation}
where $x$ is the input vector that represents the given region, with further details provided in the following section. 
Notably, this temporary component is exclusive to the pre-training phase and is replaced with GRU during model training.

Aside from this modification, the architecture during pre-training remains consistent with the HRNet model.
The scoring component processes the vector $h$ output by the temporary component, in tandem with POI encodings obtained from the hierarchical encoding component. 
The query vector is computed as ${\rm query}=f_{\rm query}(h)$, and the key vector for a potential POI $l$ is computed as ${\rm key}_l=f_{\rm key}({\rm hiencode}_d(l))$. The score for POI $l$ is then calculated using a dot product ${\rm score}_l={\rm query}\cdot {\rm key}_l$.
The application of a softmax function to these computed scores results in a probability distribution over $L$.

\subsubsection{The Objective Function}


To enhance learning, we propose  a data augmentation approach to augment the transition matrix, achieved through  proportional blending of regions (grid cells at resolution $i_{\rm res}$).
A mixing ratio vector is defined as:
$
\mathbf{r}=(r_1,\dots,r_{|L_{i_{\rm res}}|})\in [0,1]^{|L_{i_{\rm res}}|},
$
subject to the constraint $\sum_{i=1}^{|L_{i_{\rm res}}|} r_i=1$. 
Here, each $r_i$ denotes the proportion of region $i$ in the initial state, forming a prior distribution as a multinomial distribution with probabilities $\mathbf{r}$.
Then, the ground truth probability distribution given $\mathbf{r}$ is calculated as:
$$
\Pr (l_{\rm next}|\mathbf{r})=\sum_{l\in L_{i_{\rm res}}}{\rm Mult}(l_{\rm initial}=l|\mathbf{r}) \Pr(l_{\rm next}|l_{\rm initial}=l),
$$
where ${\rm Mult}(\cdot|\mathbf{r})$ is the multinomial distribution with probabilities $(r_1,\dots,r_{|L_{i_{\rm res}}|})$ and $\Pr(l_{\rm next}|l_{\rm initial}=l)={\rm DPTRAN}_{i_{\rm res}}[l]$.
This essentially models the probability distribution for the next POI, given a stay at $l_{\rm initial}$ with the probability $\Pr (l_{\rm initial}|\mathbf{r})$.
The mixing ratio is generated by sampling from a Dirichlet distribution
$
\mathbf{r}\sim {\rm Dirichlet}(\mathbf{a}),
$
where $\mathbf{a}=(1,1,\dots,1)$ and $|\mathbf{a}|=|L_{i_{\rm res}}|$.  Note that the augmented transition matrix is still DP due to the post processing property. 

For the model derived probability distribution, we incorporate the mixing ratio $\mathbf{r}$ into the location encoding as follows:
$$
x=\sum_{l\in L_{i_{\rm res}}} {\rm Mult} (l_{\rm initial}=l|\mathbf{r}){\rm hiencode}_{i_{\rm res}}(l).
$$
This is the weighted sum of the encoding of region $l\in L_{i_{\rm res}}$ with ratio $\mathbf{r}$.
Then, we use the mixed encoding $x$ as input to the temporal component (i.e., Equation~(\ref{eq:temporal_component})) to derive the key vector and then the output distribution.
Finally, the loss is
\begin{equation}
\label{eq:loss}
{\rm loss}={\rm KL}(\Pr (l_{\rm next}|\mathbf{r})||\hat{y}|\mathbf{r}),
\end{equation}
where $\hat{y}|\mathbf{r}$ is the output distribution derived from the key vector.
With this objective function, our model not only learns the probability distributions of ${\rm DPTRAN}$ but also learns from a continuous representation among these probability distributions. 

\subsubsection{Discussion}
\label{sec:discussion}
In summary, HRNet is pretrained using {\rm DPTRAN} (Equation~\ref{eq:dptran}) based on the KL loss (Equation~\ref{eq:loss}). Here, we further explain the rationale behind this choice from two perspectives:
\paragraph{Warm Start}
One challenge of DP-SGD is that it requires many iterations due to starting from a cold state. A warm start helps to mitigate this issue~\cite{amid2022public}, and we anticipate this benefit in our approach. 
This is because when the KL loss during pre-training is optimized, it also helps to optimize the cross-entropy loss during main training due to the multi-task learning.
It is important to note that optimizing cross-entropy loss is essentially the same as optimizing KL loss. Although marginalization to generate DPTRAN means the KL loss is not exactly the original cross-entropy loss, we expect it to be similar enough to provide a warm start.
\paragraph{Differential Privacy Efficiency}
There are several alternatives for pretraining beyond the first-order transition matrix {\rm DPTRAN}, such as second-order transitions and transitions that include time information. Additionally, we could use a sophisticated and nonuniform grid based on density, similar to privtree~\cite{zhang2016privtree}, for the initial grid of {\rm DPTRAN}. As shown by Zhou et al.~\cite{Zhou2021BypassingTA}, using public data to find the gradient subspace addresses the issue of a large number of parameters and improves noise stability. Therefore, a more precise {\rm DPTRAN} as mentioned above could potentially enhance the main training results. However, these methods typically require additional privacy budget.
From our experience, we found that using first-order transition and a fixed resolution ($i_{\rm res}=2$) is more beneficial than these complex methods. 
First-order transitions are denser than second-order transitions and those with time information, making the perturbed information more useful for pre-training. 
A fixed resolution allows us to allocate a larger portion of the privacy budget to perturb the transition matrix, which provides better information for pre-training.
Moreover, a fixed resolution enables algorithmic privacy budget allocation (see Section~\ref{sec:priv_budget_allocation}) without seeing raw data.

Although the selection of first-order transition and ($i_{\rm res}=2$) was heuristically made based on the principles mentioned above, it is not theoretically clear why this approach is effective. Understanding what type of information or loss contributes to the main task is challenging, and we believe this study provides some insights. Investigating the types of information that are beneficial for pre-training is an interesting direction for future research.

\subsection{Privacy Analysis of HRNet with Pre-training}
\label{sec:privacy-analysis}
We analyze the privacy guarantees of HRNet when pre-trained using the DP transition matrix.
\textcolor{black}{
\begin{theorem}
\label{theo:privacy_analysis}
    Given a dataset $D$, privacy parameters $\varepsilon_1, \varepsilon_2\in\mathbb{R}^+$ and $\delta\in[0,1]$, and initial parameters of HRNet $\theta$, the pre-training and training process for HRNet is as follows:
    \begin{equation}
    \label{eq:pre-training-phase}
    \theta^\prime = {\rm pretrain}(\theta, {\rm dptran}(D,\varepsilon_2)).
    \end{equation}
    \begin{equation}
    \label{eq:training-phase}
    \theta^{\prime\prime} = {\rm DPSGD}(D, \theta^\prime, \varepsilon_1, \delta).
    \end{equation}
    Here, ${\rm dptran}(D,\varepsilon_2)$ refers to the computation of Equation~(\ref{eq:dptran}) and {\rm pretrain} is the standard  training with loss function~(\ref{eq:loss}).
    {\rm DPSGD} is the private training algorithm described in Section~\ref{sec:differential-privacy}, utilizing the allocated privacy parameters $(\varepsilon_1,\delta)$, and initial parameter $\theta^\prime$.
    Deriving the trained parameters for HRNet (i.e., $\theta^{\prime\prime}$) satisfies $(\varepsilon_1+\varepsilon_2,\delta)$-DP.
\end{theorem}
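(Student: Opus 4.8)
The plan is to decompose the procedure that produces $\theta''$ into two mechanisms, bound the privacy of each, and glue them with the adaptive sequential composition theorem for $(\varepsilon,\delta)$-DP. First I would show that releasing the noisy transition matrix ${\rm dptran}(D,\varepsilon_2)$ of Equation~(\ref{eq:dptran}) satisfies $(\varepsilon_2,0)$-DP. This reduces to bounding the $\ell_1$-sensitivity of the un-noised matrix ${\rm TRAN}_{i_{\rm res}}$ of Equation~(\ref{eq:tran}): inserting or deleting a single trajectory $\mathbf{v}$ perturbs at most $|\mathbf{v}|-1$ entries, one per consecutive pair of stay points in $\mathbf{v}$, and each such entry moves by at most $1/|\mathbf{v}|$, since each summand is the $0/1$ indicator $\mathbf{1}_{V_{l_{i_{\rm res}},l}}(\mathbf{v})$ scaled by $1/|\mathbf{v}|$ regardless of how many times the transition recurs within $\mathbf{v}$. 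Hence the total $\ell_1$ change is at most $(|\mathbf{v}|-1)/|\mathbf{v}| < 1$, so Laplace noise calibrated to sensitivity $1$ yields $\varepsilon_2$-DP by the standard Laplace-mechanism argument.

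Second, I would argue that the pre-trained parameters $\theta' = {\rm pretrain}(\theta,{\rm dptran}(D,\varepsilon_2))$ depend on $D$ \emph{only} through ${\rm dptran}(D,\varepsilon_2)$: the initialization $\theta$ is data-independent, and every remaining ingredient of ${\rm pretrain}$ — the Dirichlet sampling of the mixing ratio $\mathbf{r}$, the augmentation that forms $\Pr(l_{\rm next}\mid\mathbf{r})$, the fixed resolution $i_{\rm res}$, and the SGD dynamics on the KL loss of Equation~(\ref{eq:loss}) — accesses no further information about $D$. By the post-processing property of DP, the mechanism ${\rm pre}\colon D\mapsto \theta'$ therefore also satisfies $(\varepsilon_2,0)$-DP.

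Third, I would invoke the guarantee of DP-SGD: for \emph{every} fixed initialization $\theta'$, the map $D\mapsto {\rm DPSGD}(D,\theta',\varepsilon_1,\delta)$ of Equation~(\ref{eq:training-phase}) satisfies $(\varepsilon_1,\delta)$-DP, as established by Abadi et al.\ under the numerical composition accounting described in Section~\ref{sec:differential-privacy}. The overall procedure is then the sequential composition in which ${\rm pre}$ is run first and its output $\theta'$ is fed as the initialization to the second, adaptively chosen mechanism ${\rm sgd}\colon (D,\theta')\mapsto {\rm DPSGD}(D,\theta',\varepsilon_1,\delta)$. Applying the adaptive composition theorem for $(\varepsilon,\delta)$-DP to these two mechanisms bounds the privacy of the joint release $(\theta',\theta'')$ by $(\varepsilon_2+\varepsilon_1,\ 0+\delta)$-DP, and since $\theta''$ alone is a post-processing of that pair, $D\mapsto\theta''$ satisfies $(\varepsilon_1+\varepsilon_2,\delta)$-DP, which is the claim.

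I expect the main obstacle to be this last step: because DP-SGD is \emph{initialized} at the data-dependent quantity $\theta'$, one cannot treat the two phases as operating on disjoint information and merely add up independent budgets; one must use \emph{adaptive} composition and verify that the DP-SGD bound holds uniformly over its auxiliary input $\theta'$ — which it does, because the per-step clipping norm, subsampling rate, and Gaussian noise scale do not depend on the starting parameters. A secondary point to make explicit is the neighboring-dataset convention underlying the sensitivity bound in the first step, together with the observation that all extra randomness used during pre-training (Dirichlet sampling, augmentation) is independent of $D$ and hence harmless under post-processing.
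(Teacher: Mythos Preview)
Your proposal is correct and follows essentially the same route as the paper: both argue that ${\rm dptran}$ is $\varepsilon_2$-DP via the Laplace mechanism, lift this to $\theta'$ by post-processing, invoke the $(\varepsilon_1,\delta)$-DP guarantee of DP-SGD, and combine the two phases by sequential composition. Your write-up is in fact more careful than the paper's own proof---you spell out the $\ell_1$-sensitivity bound of at most $(|\mathbf v|-1)/|\mathbf v|<1$ explicitly, and you correctly flag that the composition must be \emph{adaptive} because $\theta'$ is a data-dependent input to DP-SGD, whereas the paper just cites ``the composition theorem'' without elaboration.
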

\begin{proof}
    We can see that the entire computation is the sequential composition of Equation~(\ref{eq:pre-training-phase}) and Equation~(\ref{eq:training-phase}).
    For Equation~(\ref{eq:pre-training-phase}) (i.e., the pre-training phase), the raw data $D$ is employed solely for generating DPTRAN (i.e., Equation~(\ref{eq:dptran})). 
    The application of the Laplace mechanism, which is known to satisfy $\varepsilon_2$-DP~\cite{dwork2006calibrating}, ensures that the pre-training phase is $\varepsilon_2$-DP. 
    According to the post-processing theorem~\cite{dwork2014algorithmic}, computing $\theta^\prime$ satisfies $\varepsilon_2$-DP since it does not further utilize $D$.
    For Equation~(\ref{eq:training-phase}) (i.e., the training phase), the raw data $D$ is used for {\rm DPSGD} and DPSGD satisfies $(\varepsilon_1,\delta)$-DP (see Section~\ref{sec:differential-privacy} for detail).
    By applying the composition theorem of DP~\cite{dwork2014algorithmic}, deriving $\theta^{\prime\prime}$ satisfies $(\varepsilon_1+\varepsilon_2,\delta)$-DP.
\end{proof}
}

\subsubsection{Privacy Budget Allocation}
\label{sec:priv_budget_allocation}
Based on the analysis above, we have demonstrated that HRNet with pre-training satisfies $(\varepsilon_1 +\varepsilon_2,\delta)$-DP. 
A critical challenge arises in optimally allocating the given privacy budget $\varepsilon$ between $\varepsilon_1$ and $\varepsilon_2$ due to the difficulty in hyperparameter selection, such as cross-validation, under the DP constraint.
An excessive allocation to $\varepsilon_2$ (thus reducing $\varepsilon_1$) may lead to premature depletion of the privacy budget for DP-SGD during model training, potentially hindering the model's convergence and impairing its final performance. Conversely, a small $\varepsilon_2$ risks yielding an insufficient pre-trained model, thereby diminishing its effectiveness in supporting DP-SGD.
Striking a balance in privacy budget distribution is therefore crucial for enhancing the overall performance of the model. 
This section proposes a heuristic solution to this budget allocation challenge.

\paragraph{Solution}
Assume that we use the $i_{\rm res}$th resolution for the initial region of the DP transition matrix.
We propose the following allocation formula:
\begin{equation}
\label{eq:privacy budget allocation solution}
\varepsilon_2=\min \left(\frac{cw^24^{i_{\rm res}}\log (w)}{|D|},\varepsilon\right), \varepsilon_1=\varepsilon-\varepsilon_2.
\end{equation}
This solution is derived (see Appendix~\ref{sec:privacy budget allocation derivation}) in order to maintain a constant signal-to-noise ratio (SNR) in the DP transition matrix, based on the dataset's meta-information (the number of records $|D|$ and the parameter that decides the number of grid cells $w$) which is not sensitive, thus publicly available.
Our aim is to consistently ensure a minimum quality of the DP transition matrix for effective pre-training.

\section{Experiments}
Our experiments aim to evaluate the efficacy of HRNet in generating human mobility data under DP constraints. 
We first compare the privacy-utility trade-off of our approach with existing state-of-the-art methods. 
Additionally, we conduct an ablation study to assess the impact of our novel components: hierarchical location encoding, multi-task learning, and private pre-training.

\subsection{Setup}
\label{sec:exp_setup}
\subsubsection{datasets:}
We utilize five datasets, broadly classified into three categories: human mobility, taxi trajectory (road network), and synthetic data. 
Human mobility encompasses a wide range of unrestricted mobility patterns, while road network data is constrained by road networks.
The datasets employed are Geolife~\cite{zheng2009mining} (human mobility), Peopleflow~\cite{sekimoto2011pflow} (human mobility), Didi in Chengdu~\cite{wang2022deep} (taxi trajectory), and two synthetic datasets. 
The synthetic datasets are used in the ablation study and discussed in detail in Section~\ref{sec:exp_ablation}. 
We use randomly sampled $10,000$ trajectories from each dataset as the training dataset.
See details of the datasets in Appendix~\ref{sec:dataset}.




\subsubsection{Preprocessing:}
\label{sec:preprocess}
In the preprocessing phase, our approach centers on the concept of \textit{stay points}, as defined by Li et al.~\cite{li2008mining}. 
A stay point is identified when an individual remains within a radius of $m$ meters for a duration exceeding $t$ minutes. 
To prepare our dataset, we first process each trajectory to identify these stay points. 
The identified stay points are then regarded as critical waypoints, representing significant stops or areas of interest in the movement patterns of individuals. 
By focusing on these waypoints, we can distill the essence of each trajectory, thereby capturing the most relevant and informative aspects of human mobility.
Note that our preprocessing inevitably leads to a loss of detailed information, such as minor route variations and brief stops.

\subsubsection{Competitors}
\label{sec:competitors}
We compare our method with a baseline and three state-of-the-art methods in  distinct categories as outlined in Introduction:

\begin{itemize}
\item 
\textbf{Baseline}:
implements the generator as described in Section~\ref{sec:baseline} with DP-SGD. 
    \item 
\textbf{PrivTrace ~\cite{wang2023privtrace}}: a state-of-the-art Markov-based approach leveraging Markov models for mobility data analysis.
\item \textbf{Clustering ~\cite{liu2021differentially}}: a generative method with mobility data clustering using $k$-means.
\item 
\textbf{MTNet ~\cite{wang2022deep}}: a deep learning method specifically tailored for road network data.
\end{itemize}

We excluded tree-based methods such as DPT~\cite{he2015dpt} from our comparison. 
This decision was based on reports~\cite{wang2023privtrace, liu2021differentially} from the aforementioned studies, which indicated superior performance of the Markov and the clustering-based methods over tree-based approaches. 
For our implementation, we relied on publicly available open-source codes for PrivTrace and MTNet. 

\begin{table*}[]
\caption{The discrepancy of the generated dataset to the real dataset (Geolife / Peopleflow) with $w=32$ when $\varepsilon=2$ and $\delta=10^{-5}$.}
\label{tab:main_result}
\begin{tabular}{c|cccccc|cc}
\multicolumn{1}{l|}{} & \multicolumn{6}{c|}{JS Divergence $\downarrow$}                                                                                 & \multicolumn{2}{c}{ARE $\downarrow$}       \\
                      & waypoint    & destination   & transition    & travel distance  & {\textcolor{black}{diameter}} & \textcolor{black}{density\_t}      & \textcolor{black}{traj density}  & \textcolor{black}{traj pattern}  \\ \hline
Clutering             & 9.92 / 12.9 & 0.438 / 0.650 & 0.422 / 0.542 & 0.0227 / 0.154   & 0.0921 / 0.133                  & 0.123 / 0.0813               & 0.233 / 0.852 & 0.701 / 0.772 \\
PrivTrace             & 12.6 / 11.0 & 0.274 / 0.649 & 0.245 / 0.431 & 0.0476 / 0.0181  & 0.0712 / 0.0531                 & -               & 0.212 / 0.835 & 0.712 / 0.781 \\
Baseline              & 6.68 / 10.1 & 0.397 / 0.670 & 0.397 / 0.486 & 0.0275 / 0.151   & 0.0932 / 0.303                  & 0.115 / 0.0512  & 0.382 / 0.811 & 0.771 / 0.912 \\
HRNet         & 5.67 / 6.17 & 0.192 / 0.586 & 0.212 / 0.309 & 0.00871 / 0.0569 & 0.0681 / 0.183                  & 0.0523 / 0.0425 & 0.184 / 0.562 & 0.671 / 0.743
\end{tabular}
\end{table*}

\subsubsection{Evaluation metrics:}
To compare the quality of generated mobility data, we evaluate the discrepancies between generated and original stay-point trajectories (denoted as $\mathbf{v}$) across various metrics. 
We employ the Jensen-Shannon (JS) divergence and average relative error (ARE) for measuring discrepancy (both the lower the better), with details provided in the Appendix~\ref{sec:evaluation_metrics}. 
The key metrics considered are as follows which are all probability distributions.
\textbf{Waypoint} captures which waypoints are traversed given a starting point.
\textbf{Destination} captures which location serves as the final destination given a starting point.
\textbf{Transition} captures where the trajectory moves next from a starting point.
\textbf{Travel distance} represents the total length of a trajectory.
\textbf{Diameter} measures the maximum distance between any two points along a trajectory.
\textbf{Route} captures the route taken from a starting point.
\textbf{Density at $t$} is the probability of a trajectory being at a specific location at time $t$.
\textbf{Trajectory density} is the probability of a trajectory passing through a specific area.
\textbf{Trajectory pattern} captures the most frequent  transition patterns.

\subsection{Comparison with Baseline, PrivTrace, and Clustering}
In this section, we present a comparative evaluation of HRNet against existing approaches for human mobility not constrained by road networks. 
The competitors in this analysis include the Baseline, PrivTrace, and Clustering methods.
The used datasets are the Geolife dataset and the Peopleflow dataset.

\subsubsection{Main result}
Table~\ref{tab:main_result} showcases the main result, where we assess the discrepancy across four metrics with a fixed privacy budget ($\varepsilon=2$). 
Notably, our proposed method demonstrates significant superiority across all metrics compared to the other approaches. 
The statistical-based methods, such as Clustering and PrivTrace, inherently struggle with a high signal-to-noise ratio due to the need to add noise to sparse data. 
Conversely, deep learning methods have the potential to uncover hidden characteristics of POIs, moving beyond mere sparse information.
The Baseline method experiences a decrease in utility, largely due to its incompatibility with DP-SGD, as it still relies on learning sparse information through an embedding matrix. 
Our approach, incorporating a deconvolutional network, multi-task training, and private pre-training, effectively learns dense, hidden characteristics in a manner that harmonizes with DP-SGD.

\subsubsection{Utility-privacy trade-off}

This subsection delves into the utility-privacy trade-off of HRNet in comparison to other competitors, achieved by varying the privacy budget. 
The results depicted in Figures~\ref{fig:comp_destination} and \ref{fig:comp_waypoint} illustrate this analysis. 
It is observed that at lower privacy budgets (e.g., $\varepsilon<1$), PrivTrace and clustering sometimes exhibit better performance. 
This is attributed to their direct computation of transitions.

In contrast, deep learning methods generally require extensive training epochs to learn the hidden features of human mobility. However, our method enhances the learning process through pre-training and multi-task learning, effectively reducing the need for a larger privacy budget, unlike the baseline method. 
Furthermore, as epsilon increases, deep learning approaches increasingly outperform their counterparts. 
This advantage stems from their direct learning of raw data, avoiding the utility loss associated with sensitivity bounding, a common issue in statistical-based methods like clustering and PrivTrace.

\subsection{Comparison with MTNet}
MTNet~\cite{wang2022deep} represents a leading deep learning methodology designed specifically for generating short trips within road networks. 
The architecture of MTNet necessitates that trajectories be confined to connected road networks. 
This requirement is incompatible with datasets like Geolife and Peopleflow, which encompass mobility patterns extending beyond road networks, including activities like train travel or park traversing.
Hence, for a comparative analysis with MTNet, we utilize the Didi dataset.

\begin{figure*}[t]
    \centering
    \begin{minipage}{.47\columnwidth}
        \centering
        \includegraphics[width=\linewidth]{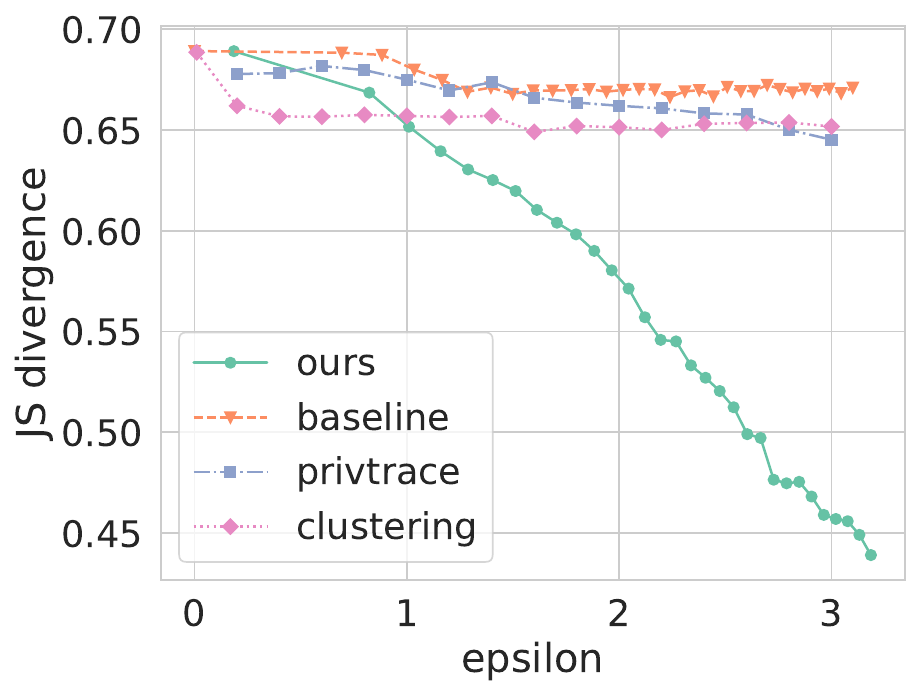}
        \caption{The discrepancy of destination on Peopleflow dataset ($w=64$) for each $\varepsilon\in [0,3.0]$.}
        \label{fig:comp_destination}
    \end{minipage}%
    \hspace{10pt}
    \begin{minipage}{.47\columnwidth}
        \centering
        \includegraphics[width=\linewidth]{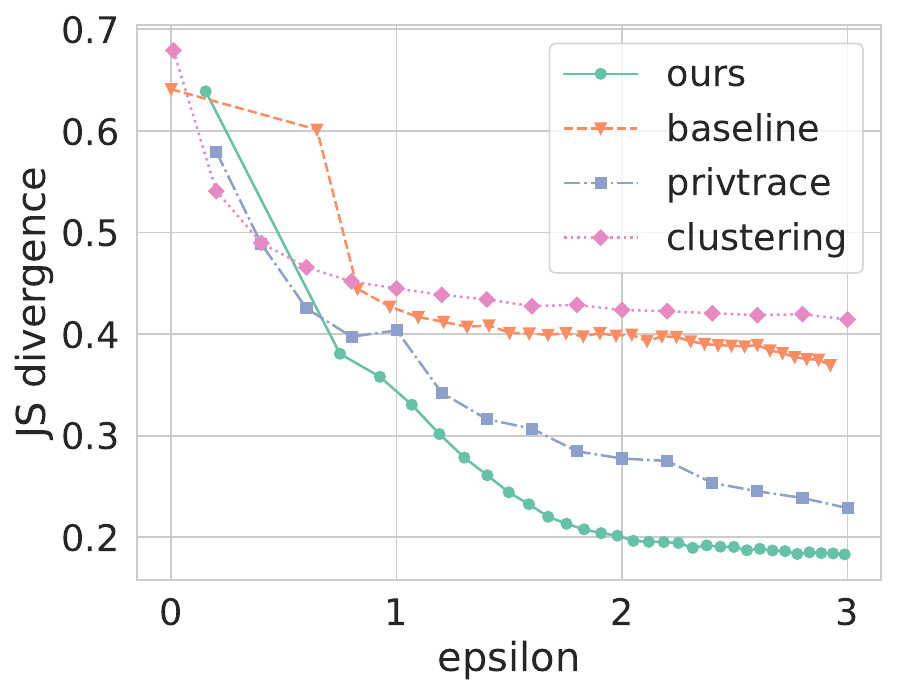}
        \caption{The discrepancy of transition on Geolife dataset ($w=32$) for each $\varepsilon\in [0,3.0]$.}
        \label{fig:comp_waypoint}
    \end{minipage}
    \hspace{10pt}
    \begin{minipage}{.47\columnwidth}
        \centering
        \includegraphics[width=\linewidth]{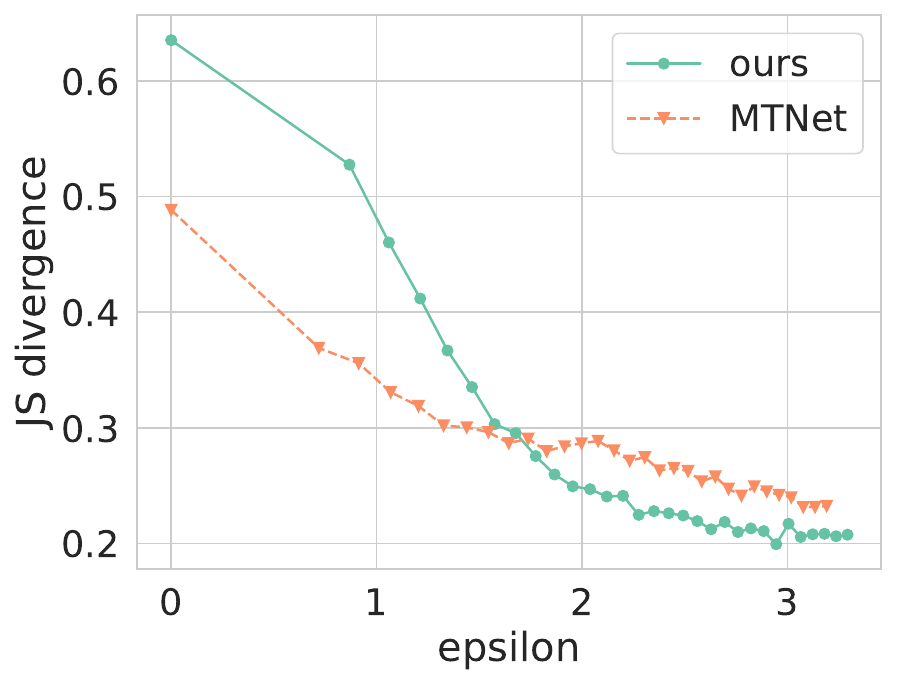}
        \caption{The discrepancy of destination on Didi dataset ($w=32$) for each $\varepsilon$ and $\delta=10^{-5}$.}
        \label{fig:mtnet_destination}
    \end{minipage}%
    \hspace{10pt}
    \begin{minipage}{.47\columnwidth}
        \centering
        \includegraphics[width=\linewidth]{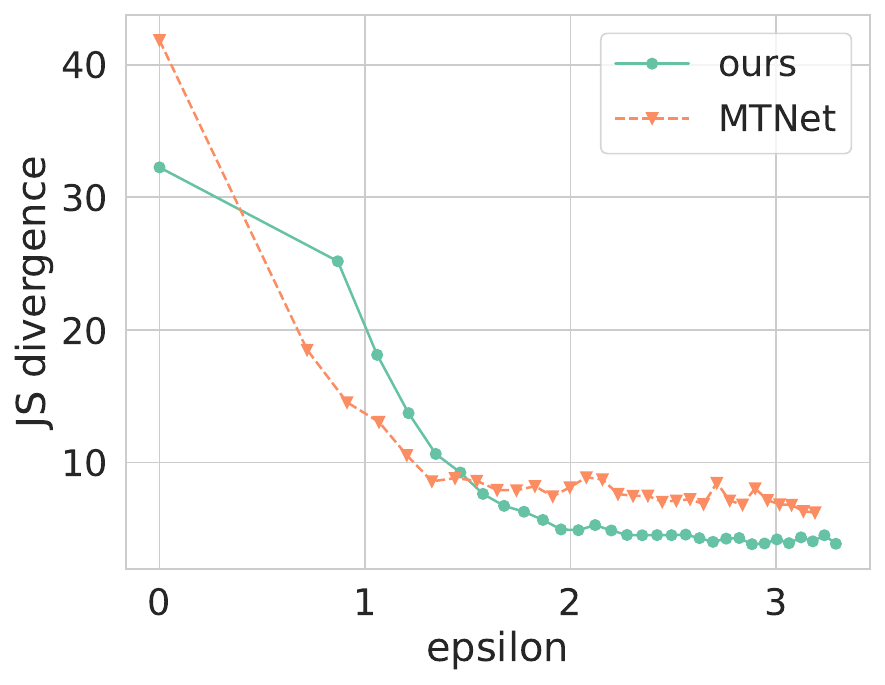}
        \caption{The discrepancy of route on  Didi dataset ($w=32$) for each $\varepsilon$ and $\delta=10^{-5}$.}
        \label{fig:mtnet_route}
    \end{minipage}
\end{figure*}

Our results, as illustrated in Figures~\ref{fig:mtnet_destination} and \ref{fig:mtnet_route}, focus on evaluating destination and route discrepancies. 
For both metrics, MTNet initially exhibits superior performance, primarily because it generates trajectories using road network metadata. 
However, as epsilon increases, our method demonstrates improved results. 
This discrepancy arises from the differing objectives of the training targets. 
Our method is tailored to learn semantically significant next waypoints. 
Conversely, MTNet is designed to directly learn the next road segment.
As a result, MTNet is required to learn a longer sequence which is difficult to learn under DP constraints.
Hence, our method has better convergence in the case where DP-SGD is applied.


\subsection{Ablation study}
\label{sec:exp_ablation}
In this section, we conduct an ablation study to elucidate the individual contributions of the key components of HRNet: the hierarchical location encoding via deconvolutional network, multi-resolution multi-task learning, and private pre-training. 
Our analysis begins with the Geolife dataset, followed by an examination using synthetic datasets, each with distinct characteristics.

\textbf{Random Dataset:}
The Random dataset comprises of trajectories, each with a fixed length $2$. 
The initial and second locations in these trajectories are randomly selected from all POIs, forming a total of $10,000$ trajectories. 
This dataset is characterized by completely random empirical transition distributions, effectively disregarding the first law of geography.

\textbf{Straight Dataset:}
Each trajectory in the Straight dataset is of fixed length $3$. 
The first location is chosen from even columns on a grid (i.e., $L_1=\{l\in [w*w]|\lfloor l/w\rfloor /2=0\}$). 
Subsequently, the second and third locations are determined as the next two rows above in the same column ($l_2=l_1+w$ and $l_3=l_2+w$). 
This study creates $10,000$ trajectories.

The ablation study is structured around six configurations, combining the elements of deconvolutional network, multi-task training, and private pre-training:
\begin{itemize}
    \item 
\textbf{Baseline:} baseline generator with DP-SGD.
\item \textbf{Pre-training:} baseline + private pre-training (Section~\ref{sec:pre-training}). Unlike our complete model, the location encoding component is not pre-trained here due to the absence of coarse region embeddings.
\item \textbf{Deconvolutional Network:} the location encoding component of the baseline is replaced with a 2D quad deconvolutional network.
\item \textbf{Deconvolutional Network + Pre-training:} this combines private pre-training with the deconvolutional network setup.
\item \textbf{Deconvolutional Network + Multi-task Training:} this integrates multi-task training (Section~\ref{sec:multi-task training}) with the deconvolutional network framework.
\item \textbf{Deconvolutional Network + Multi-task Training + Pre-training (HRNet):} the complete solution. 
\end{itemize}

Through this study, we aim to isolate and understand the contribution of each component to the overall efficacy of our method in capturing and replicating human mobility patterns.

\subsubsection{Real data}
\label{sec:ablation_real}
\begin{figure}[t]
    \centering
    \begin{minipage}{.47\columnwidth}
        \centering
        \includegraphics[width=\linewidth]{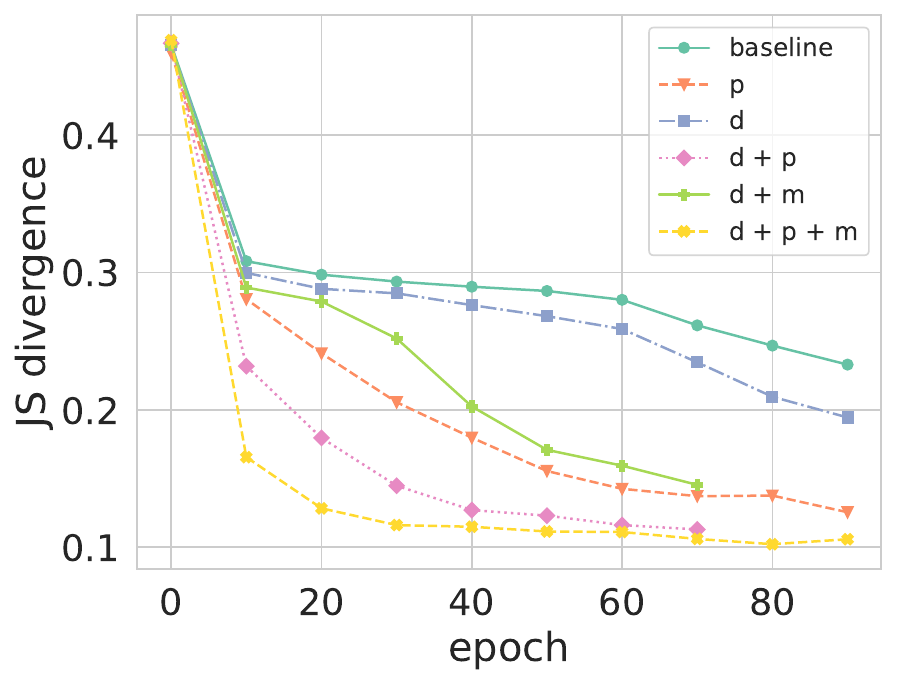}
        \caption{The discrepancy of  transition on Geolife  ($w=64$). 
        $p$, $d$, and $m$ represent pre-raining, deconvolutional network, and multi-task learning, respectively.}
        \label{fig:ablation_real}
    \end{minipage}%
    \hspace{10pt}
    \begin{minipage}{.47\columnwidth}
        \centering
        \includegraphics[width=\linewidth]{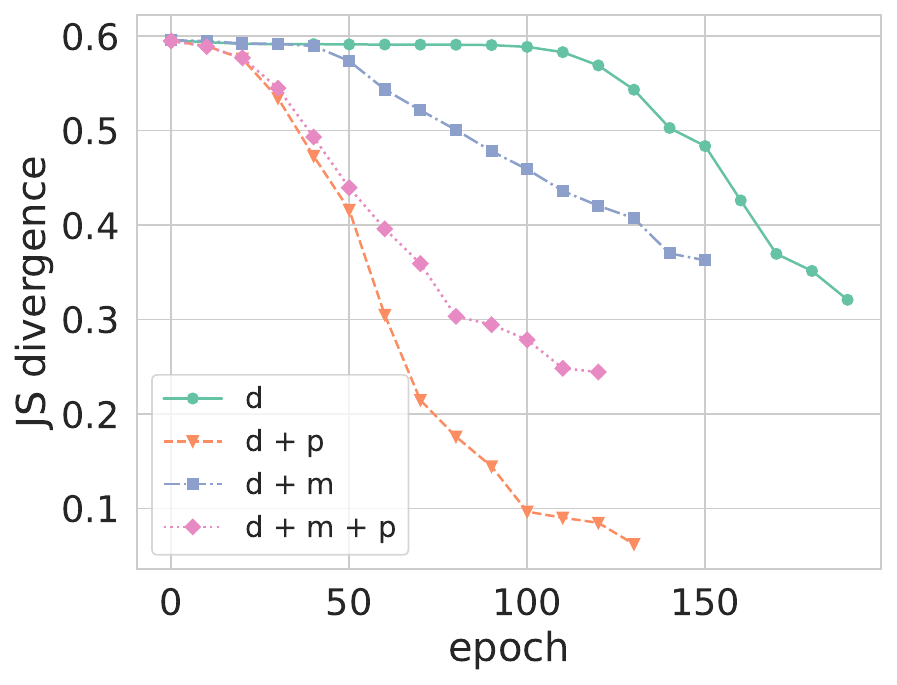}
        \caption{The discrepancy of transition on  Random dataset ($w=32$). $d$, $m$, and $p$ mean deconvolutional network, multi-task learning, and pre-training, respectively.}
        \label{fig:ablation_multi-task}
    \end{minipage}
    \hspace{10pt}
\end{figure}

The evaluation on the Geolife dataset with a grid size of $w=64$ is depicted in Figure~\ref{fig:ablation_real}. 
The results indicate that the incorporation of each component – the deconvolutional network, multi-task learning, and pre-training – not only enhances the accuracy in terms of discrepancy reduction but also accelerates convergence.

The deconvolutional network, in particular, contributes to improvement of discrepancy by offering an efficient parameterization. 
Its architecture supports the integration of multi-task learning and pre-training while providing benefits even when deployed independently. 
Multi-task learning improves the convergence speed with DP-SGD by incorporating simpler tasks as observed in the study by Dockhorn et al.\cite{dockhorn2022differentially}. 
Pre-training, on the other hand, offers a substantial boost in convergence speed through a warm start and enhances utility, aligning with findings by Amid et al.~\cite{amid2022public}.

\subsubsection{Deconvolutional Network}
\label{sec:exp_deconv}
\begin{table}[]
\caption{\textcolor{black}{The number of parameters with varying grid size $w$.}}
\label{tab:n_parameters}
\begin{tabular}{@{}ccccc@{}}
\toprule
         & $w=8$    & $w=16$   & $w=32$   & $w=64$    \\ \midrule
baseline & $14,054$  & $26,534$ & $76,454$ & $276,134$ \\
ours     & $24,710$ & $28,934$ & $41,606$ & $47,942$  \\ \bottomrule
\end{tabular}
\end{table}
\textcolor{black}{
The deconvolutional network not only enables multi-task learning and private pre-training but also reduces the number of parameters. As mentioned in Section~\ref{sec:bottleneck}, it is known that the lower bound of empirical risk increases as the number of parameters increases~\cite{bassily2020stability}. Therefore, the reduced number of parameters is a crucial advantage of the deconvolutional network, and we explore this aspect here.
Table~\ref{tab:n_parameters} compares the number of parameters with the baseline architecture (i.e., matrix embedding). The number of parameters in HRNet depends logarithmically on the number of POIs, in contrast to the baseline mechanism, which depends linearly. 
As a result, even if $w$ becomes larger, the number of parameters remains stable. 
This difference theoretically improves stability, and this is supported by the experimental results shown in Figure~\ref{fig:ablation_real}.
}

\subsubsection{Multi-task learning}
\label{sec:exp_multi_task}

HRNet benefits from the efficient utilization of inferences derived from simpler tasks, which align with the first law of geography.
\textcolor{black}{
It is important to note that bias based on the first law of geography can also have a negative impact.}
To further substantiate this point, we investigate a scenario where the first law of geography is not applicable, utilizing the Random dataset for this purpose. 
The results of this investigation are presented in Figure~\ref{fig:ablation_multi-task}.
Contrary to previous observations, the addition of multi-task learning in this context results in a decrease in utility. 
This decline can be attributed to the inductive bias towards the first law of geography, which, in the case of the Random dataset, leads to a misalignment with the actual data characteristics. 
Consequently, the multi-task learning framework, while beneficial under certain conditions, may  impact the results negatively when the underlying data does not adhere to the geographic principles it assumes. 

\textcolor{black}{
Even for datasets that generally follow the first law of geography, it is possible for some grid cells to not satisfy this law. 
For example, a grid cell near the border of a coarser cell tends to differ from the characteristics of the coarser cell because people in this grid cell often move to the adjacent coarser cell, whereas people in grid cells near the center of the coarser cell tend to stay within the same coarser cell. 
It is important to note that some cells may suffer due to this bias, even though it globally improves performance. 
Solving this aspect further is an interesting direction for future research.
}
\begin{table}[]
\caption{\textcolor{black}{Results for next location prediction and trajectory classification. We used the same privacy budget for the baseline model and HRNet $(\varepsilon=3.0,\delta=10^{-5})$.}}
\label{tab:classification_table}
\begin{tabular}{@{}lccccc@{}}
\toprule
                                                                                                           &                          & Acc@1 & Acc@5 & F-score & AUROC \\ \midrule
\multicolumn{1}{c|}{\multirow{3}{*}{\begin{tabular}[c]{@{}c@{}}next \\loc\\ pred\end{tabular}}}   & baseline w/o DP          & 29.7  & 45.1  & 40.9      & 84.5  \\
\multicolumn{1}{c|}{}                                                                                      & baseline                 & 13.3  &30.6   & 25.2      & 60.3  \\
\multicolumn{1}{c|}{}                                                                                      & HRNet                     & 23.1 & 37.7  & 30.2      & 70.2  \\ \midrule
\multicolumn{1}{l|}{\multirow{4}{*}{\begin{tabular}[c]{@{}l@{}}traj \\ class\\ \end{tabular}}} & Random Forest & 78.2  & -     & 74.3      & 92.6  \\
\multicolumn{1}{l|}{}                                                                                      & baseline w/o DP          & 73.4  & -     & 70.1      & 91.3  \\
\multicolumn{1}{l|}{}                                                                                      & baseline                 & 48.3  & -     & 47.8      & 72.1  \\
\multicolumn{1}{l|}{}                                                                                      & HRNet                     & 58.3  & -     & 55.9      & 78.7  \\ \bottomrule

\end{tabular}
\end{table}
\textcolor{black}{
\subsection{Application Case Study}
\label{sec:application_case_study}
}
\subsubsection{\textcolor{black}{Next location prediction}}
\label{sec:exp_next_location_prediction}
\textcolor{black}{
Next location prediction or recommendation is crucial for various applications such as smart cities, tourism, and advertising. 
In this section, we evaluate our model’s ability to predict the next location. 
To ensure a more realistic setting, we preprocess the Geolife dataset with the notion of significant locations~\cite{Martin_2023_trackintel, hong2023context, yang2014predicting} to make scatterd POIs, instead of a grid-based approach. 
Details on embedding scattered POIs instead of grid cells can be found in Section~\ref{sec:loc_assign}.
This preprocessing results in $734$ POIs and $9,816$ trajectories, which are split into a training set and a test set with a ratio of $0.9$ to $0.1$. 
We trained generative models on the training dataset and performed next location inference on the test dataset.
Table~\ref{tab:classification_table} presents the results. Acc@$k$ represents the accuracy when the correct answer is included in the top-$k$ predictions, and AUROC is the area under the receiver operating characteristic curve.
The baseline w/o DP is trained using SGD instead of DP-SGD. 
We spent the same privacy budget $(\varepsilon=3,\delta=10^{-5})$ to both the baseline and HRNet. 
The results show that while the baseline model significantly loses utility due to noise, HRNet outperforms the baseline, demonstrating its robustness to noise in the case of scattered POIs.
}
\subsubsection{\textcolor{black}{Trajectory classification}}
\label{sec:trajectory_classification}
\textcolor{black}{
Trajectory classification involves training a model to categorize trajectories into different types.
Here, we consider classification of transportation modes, a well-known problem in trajectory classification~\cite{zheng2010understanding, xiao2017identifying, liu2019spatio}. 
Some of trajectories in the Geolife dataset are labeled as bike, walk, car, bus, and train. 
We create $5,980$ trajectories using grid-based discretization with $w=32$, splitting them into a training set and a test set with a $0.9$ to $0.1$ ratio.
Using the training set, we first train a generative model (HRNet or the baseline) to synthesize trajectories. 
Here, the label information is incorporated into the generative model by extending Equation~\ref{eq:baseline_encode} as follows:
$
{\rm encode}(v_i)=[M_{\rm time}[t_i], M_{\rm POI}[l_i], M_{\rm label}[s]],
$
where $M_{\rm label}$ is the embedding matrix for labels and $s$ represents the label index. This encoding method generates trajectories considering the label information.
With this generated labeled data, we extract features following the method proposed by Liao et al.~\cite{liao2007learning} and train a random forest classifier using these features. 
The lower part of Table~\ref{tab:classification_table} shows the classification results on the test dataset.
The Random Forest results represent the model trained with raw data. Compared to the baseline, our model performs better, but there is still a significant performance drop. This is because trajectory classification requires maintaining spatio-temporal correlations, making it more challenging than next location prediction. Additionally, our model uses discretized time information with the embedding matrix, and the preprocessing required for embedding inherently leads to some information loss.
}
\subsubsection{\textcolor{black}{Commuting distance analysis}}
\textcolor{black}{
Analyzing commuting patterns is a real-world application, as conducted by the U.S. Census Bureau~\cite{machanavajjhala2008privacy}. In this study, similar to the work by Machanavajjhala~\cite{machanavajjhala2008privacy}, we conducted a commuting distance analysis on the generated data. We utilized trajectories labeled as "commuting" from the peopleflow data.
We created $20,000$ trajectories using grid-based discretization with $w=32$ for these commuting trajectories. We then trained a generative model (HRNet or the baseline model) to generate synthetic commuting trajectories, and analyzed them. Figures~\ref{fig:od} show the results.
Each point in the figures corresponds to a specific destination cell. The x-axis shows the number of trajectories ending in that destination cell, and the y-axis shows the average commute distance to that destination block.
For the baseline model (the left figure), many average commute distances are longer than the original average commute distances, especially for destinations with a smaller number of trajectories. In contrast, our generated trajectories (the right figure) result in average commute distances that are similar to the original ones, even for destinations with a smaller number of trajectories.
}

\label{sec:o_d_analysis}
\begin{figure}[t]
    \centering
    \begin{minipage}{.47\columnwidth}
        \centering
        \includegraphics[width=\linewidth]{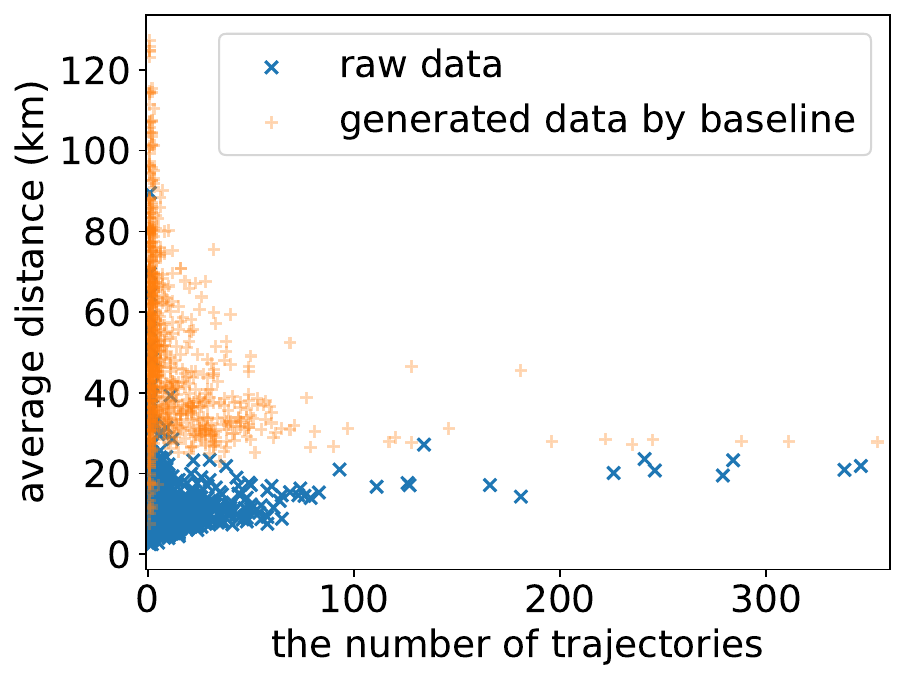}
    \end{minipage}%
    \hspace{10pt}
    \begin{minipage}{.47\columnwidth}
        \centering
        \includegraphics[width=\linewidth]{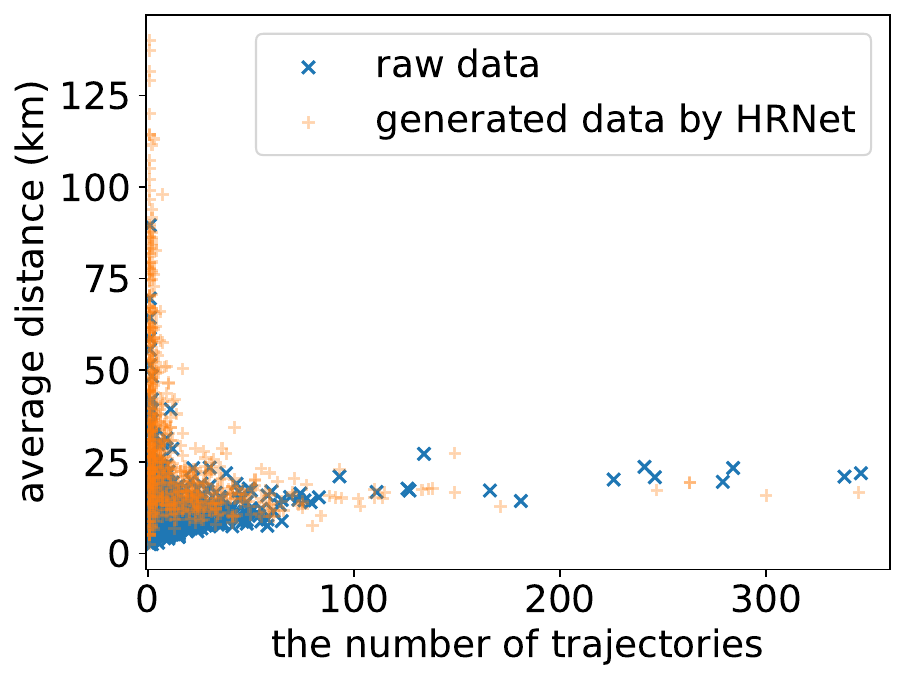}
    \end{minipage}%
    \caption{\textcolor{black}{Average commuting distance for each destination by the baseline model (left) and HRNet (right).}\label{fig:od}}
\end{figure}

\section{Related Work}
\label{sec:related work}
\subsection{Non-deep Learning Models}
\paragraph{Clustering}
Clustering-based methods~\cite{hua2015differentially, li2017achieving} first cluster the locations to reduce the number of locations as a preprocessing step. 
Then, they count the transitions in the smaller location domain (i.e., classes) to create a probabilistic model with the Laplace mechanism.
Many clustering-based methods~\cite{hua2015differentially, li2017achieving} have been found to cause unexpected privacy leakage in the data-dependent preprocessing due to flaws in the privacy proof~\cite{miranda2023sok}. 
Instead, data-dependent preprocessing and the use of the Laplace mechanism, which is very effective with respect to privacy-utility trade-off, can sometimes perform well with small $\varepsilon$ values (see Figure~\ref{fig:comp_destination}). 
However, clustering based preprocessing leads to substantial information loss, and even with large $\varepsilon$ values, accuracy cannot be significantly improved. 
\paragraph{Tree and Markov models}
Tree~\cite{chen2012differentially, he2015dpt} and Markov~\cite{wang2023privtrace, gursoy2018differentially} methods first construct a data structure (e.g., prefix tree) for trajectories. Then, they count elements (e.g., nodes) to create the probabilistic model.
Essentially, tree and Markov methods face two problems due to the high uniqueness of patterns in trajectory:
how to define locations (e.g., determining the grid size $w$) and how to resolve sparsity. 
Since the initial study by Chen et al.~\cite{chen2012differentially}, subsequent studies have attempted to address these issues. 
In the work~\cite{chen2013privacy}, substrings are considered for building an exploration tree based on a Markov assumption, resulting in higher leaf counts and better utility. 
DPT~\cite{he2015dpt} addresses the sparsity problem by restricting movement to adjacent cells using a hierarchical structure, effectively capturing short trajectory features. 
However, this method’s limitation to adjacent cell movement prevents conversion to stay point trajectories, resulting in longer sequence lengths. Given DP's characteristics, performance significantly degrades with longer sequence lengths, making it unsuitable for capturing broad patterns such as daily human movement. 
PrivTrace~\cite{wang2023privtrace} addresses the sparsity problem heuristically, yielding good results on some datasets, but there remains the problem of determining constants used in heuristics. 
Our method addresses them using hierarchical networks and multi-task learning.

\subsection{Deep Learning Models}
As argued in this study, when using DP-SGD, sparsity emerges as a significant issue in deep learning, and thus not many studies propose using DP-SGD.
Ahuja et al.~\cite{ahuja2020differentially} propose using negative sampling and skip-gram to solve this sparsity issue.
Note that such techniques could also be applicable to our study. 
Wang et al.~\cite{wang2022deep} address this problem similarly to DPT by restricting movement to adjacent road network segments. Hence, as demonstrated in Figure~\ref{fig:mtnet_destination}, they effectively capture short trajectory features and perform well with small $\varepsilon$ due to prior road network information. 
However, the same problem as DPT arises, making it unsuitable for capturing broad daily human movement patterns.
Without the formal guarantees like DP, many deep learning approaches \cite{fontana2023gans, ozeki2023balancing, song2023except} rely on obfuscation caused by the generation itself. 
However, without any formal guarantee, it is unclear if privacy is genuinely preserved~\cite{stadler2022synthetic}.
Even without privacy-aware considerations, sparsity remains a problem, and there are studies proposing models similar to ours. 
Lim et al.~\cite{lim2022hierarchical} address sparsity by employing multi-task learning with multiple resolutions, but their naive creation of embedding matrices for each resolution results in a large number of parameters ($O(n^2)$), which is not suitable to DP-SGD. 
Lian et al.~\cite{lian2020geography} adopt a hierarchical location encoding using a quadtree similar to our study, but instead of using a deconvolutional network, they treat quadtree paths as sequences and perform sequence embedding with the attention architecture~\cite{vaswani2017attention}. Consequently, encoding a single location requires processing a sequence length corresponding to the quadtree depth. Longer sequence lengths degrade DP-SGD performance, which is not ideal. Moreover, the application of multi-task learning to such methods is not trivial.
Many deep learning methods for capturing trajectory features have been considered~\cite{feng2018deepmove, luo2021stan, lim2020stp, yang2020location}. 
However, directly encoding the tuple of latitude and longitude is fundamentally challenging~\cite{lian2020geography}, leading to the use of embedding matrices. 
Using embedding matrices poses the same issues as the baseline.

\section{Conclusion}
In this paper, we introduced HRNet, a novel framework designed to effectively learn and generate human mobility data under DP constraints. 
HRNet incorporates three novel components: a hierarchical location encoding component, multi-resolution multi-task learning, and private pre-training. 
The concept of a hierarchical network, as employed in HRNet, has intrinsic value beyond its application in location encoding. 
The hierarchical nature of data is not unique to locations but is also evident in language for example, where words possess semantic layers. 
This observation opens up possibilities for applying our hierarchical approach to improve language models under DP constraints, suggesting a promising avenue for future research in the field of privacy-preserving deep learning.

\begin{acks}
This work was supported by the support of JST SICORP JPMJSC2107, JST CREST JPMJCR21M2, JST PRESTO JPMJPR23P5, JSPS KAKENHI JP22H03595, JP23K24851, JP21K19767, NSF CNS-2125530, CNS-2124104, IIS-2302968, and CDC 1NU38FT000001-01-00.
\end{acks}


\bibliographystyle{ACM-Reference-Format}
\balance
\bibliography{sample}

\clearpage

\appendix

\section{More Details}

\subsection{Derivation of the privacy budget allocation}
\label{sec:privacy budget allocation derivation}
We derive the solution~(\ref{eq:privacy budget allocation solution}) as follows. 
The noise introduced by the Laplace mechanism is
$
O\left(\frac{\log (w^2)}{\varepsilon_2}\right)
$
~\cite{vadhan2017complexity}.
Due to the definition of the transition matrix (see Equation~\ref{eq:tran}), the signal in an element of the DP transition matrix is
$
O\left(\frac{|D|}{4^{i_{\rm res}} w^2}\right).
$
Then, based on the noise and signal, SNR is represented as
$
{\rm SNR}=O\left(\frac{4^{i_{\rm res}}w^2\log (w^2)}{|D|\varepsilon_2}\right).
$
We can see that the asymptotic trend of SNR is determined solely based on the meta-information. 
Hence, setting
$
\varepsilon_2=c\left(\frac{4^{i_{\rm res}}w^2\log (w^2)}{|D|}\right)
$ 
ensures a constant SNR, with empirical results indicating that $c=0.018$ offers a balanced approach. 
While heuristic, experimental evidence suggests that this allocation strategy works well across various settings.

\subsection{Dataset:}
\label{sec:dataset}

\textbf{Geolife (human mobility)~\cite{zheng2009mining}:}
This is a dataset of human mobility in Beijing, China, which was collected by Microsoft Research Asia with GPS devices of individuals. 
Each trajectory includes a set of trips of human represented by latitude and longitude, with random time intervals.
Most trajectories are in Beijing, China, so we select the trajectories that are in the specific range in Beijing.

\textbf{Peopleflow (human mobility)~\cite{sekimoto2011pflow}:}
Originating from Tokyo University's research, the Peopleflow dataset encapsulates human mobility data derived from the Person Trip Survey by the Japan International Cooperation Agency. 
Each trajectory  represents daily human movements in Tokyo during 2008 at one-minute intervals.

\textbf{Didi in Chengdu (road network)~\cite{wang2022deep}:}
This dataset focuses on taxi mobility in Chengdu, China, constrained by road networks. 
Released by Didi, this dataset has been partially processed and made available by Wang et al.~\cite{wang2022deep}. 
It features short-distance taxi trips, each trajectory representing a portion of a longer journey.


\subsection{Evaluation metrics:}
\label{sec:evaluation_metrics}
\textbf{Waypoint:}
A typical trajectory comprises of multiple waypoints of semantic significance (e.g., moving from home to a park and then to a restaurant). 
In this analysis, we define stay points (refer to Section~\ref{sec:preprocess} for details) as waypoints. 
The probability of a location $l\in L$ being a waypoint, given the start POI $l_{\rm start}$, is evaluated using the formula:
\begin{equation}
\label{eq:prob_way}
\Pr(l\in \mathbf{v}| v_0=l_{\rm start}).
\end{equation}

\textbf{Destination:}
The final POI in a trajectory holds significant semantic value as the destination. 
We calculate the probability of a location $l\in L$ being the destination, given the starting POI $l_{\rm start}$, as follows:
\begin{equation}
\label{eq:prob_destination}
\Pr (\mathbf{v}[-1]=l|v_0=l_{\rm start}).
\end{equation}
Here, $\mathbf{v}[-1]$ is the last POI of $\mathbf{v}$. 
In this context, we highlight that this forms a probability distribution over all POIs because $\sum_{l\in L} \Pr (\mathbf{v}[-1]=l|v_0=l_{\rm start})=1$.

\textbf{Transition:}
The transition from a starting location to subsequent points is a key characteristic of a trajectory. 
We assess the probability of transitioning to a location $l\in L$ as the next waypoint from the first POI.
\begin{equation}
\label{eq:prob_transition}
\Pr (l=\mathbf{v}[1]|v_0=l_{\rm start}).
\end{equation}
Similar to destination, this forms a probability distribution over all POIs $L$. 

\textbf{Travel distance:}
The total distance traversed in a trajectory is a vital aspect of human mobility characterization. We evaluate the travel distance range $r\in R$ covered by $\mathbf{v}$ using the formula:
\begin{equation}
\label{eq:prob_travel}
\Pr \left(\sum^{|\mathbf{v}|}_{i=0} d({\mathbf{v}}[i],\mathbf{v}[i+1])\in r\right).
\end{equation}
In this equation, $R$ comprises ranges $[0,d_{\rm max}/n_{\rm bin}),[d_{\rm max}/n_{\rm bin},2*d_{\rm max}/n_{\rm bin})\dots, [(n_{\rm bin}-1)*d_{\rm max}/n_{\rm bin},d_{\rm max}]$ with $d_{\rm max}$ being the maximum distance and the integer $n_{\rm bin}$ representing the bin size. 
The function $d$ denotes the Euclidean distance on the map. This metric forms a probability distribution over all ranges $R$.

\textbf{\textcolor{black}{Diameter:}}
\textcolor{black}{
The diameter indicates the maximum distance between any two points in a trajectory. We evaluate the diameter range $r\in R$ covered by $\mathbf{v}$ using the formula:
\begin{equation}
\label{eq:prob_travel}
\Pr \left(\max_{i,j\in [|\mathbf{v}|]} d({\mathbf{v}}[i],\mathbf{v}[j])\in r\right).
\end{equation}
See above for the definition of $R$.
}

\textbf{Route:}
A route is a sequence of passing points to traverse waypoints in $\mathbf{v}$, denoted as ${\rm route}(\mathbf{v})$. 
For trajectories sampled from the original dataset, the original trajectory before being preprocessed is referenced for the route. 
In contrast, for trajectories from a generator, the route is constructed using the shortest path between the two adjacent waypoints.
Note that this type of post-processing is unnecessary for MTNet, as it directly learns and generates route trajectories, rather than stay-point trajectories. 
The probability of a POI $l\in L$ being part of the route, given the starting POI $l_{\rm start}$, is calculated as:
\begin{equation}
\label{eq:prob_route}
\Pr (l\in {\rm route}(\mathbf{v})|v_0=l_{\rm start}).
\end{equation}

\textcolor{black}{
\textbf{Density at $t$:}
Density at $t$ represents a list of counts of the trajectories whose location at time $t$ is a specific location.
Let $\mathbf{t}$ be the generated time stamps where $\mathbf{t}[i]$ corresponds to time that one visits the location $\mathbf{v}[i]$.
Then, density\_$t$ is defined as follows:
$$
\Pr(\mathbf{v}[i]=l|\mathbf{t}[i]<t\ \&\  \mathbf{t}[i+1]>t).
$$
}

\textbf{\textcolor{black}{Trajectory density:}}
\textcolor{black}{
It measures the number of trajectories passing through a specific area. We first generate 500 random sets $Q$ of POIs by choosing random number of POIs in the map, and count the number of trajectories passing through each area. 
Denote the set of POIs as $q\in Q$, the trajectory density count in $D$ is as follows:
$$
{\rm count}(q,D)=\sum_{\mathbf{v}\in D}\mathbf{1} \left[\exists i \in [|\mathbf{v}|],\mathbf{v}[i]\in q\right]
$$
}

\textbf{\textcolor{black}{Trajectory pattern:}}
\textcolor{black}{
It captures the frequency of transiting from one place to another.
In most downstream tasks, only frequent patterns are considered, so we use the top $200$ frequent patterns $Q$, which are longer than $2$.
Denote each pattern as $q\in Q$, the trajectory pattern count in $D$ is as follows ${\rm count}(q,D)=$:
$$
\sum_{\mathbf{v}\in D}\mathbf{1} [\exists i,j\in [\mathbf{v}],i<j\ \text{s.t.}\ (\mathbf{v}[i], \mathbf{v}[i+1], \ldots, \mathbf{v}[j])= q],
$$
}

\subsubsection{Discrepancy}
In this section, we detail the methodology for computing discrepancies across various metrics using the JS divergence.
In our study, we define $L_{\rm start}$ as the set of candidate locations that serve as potential starting POIs. 
To determine $L_{\rm start}$, we have selected the top $30$ locations with the highest occurrence rates in our dataset as the initial POIs. 
This selection provides a representative sample for our analysis.

\textbf{Waypoint and route:}
For the metrics of waypoint and route, as defined in probability distributions~(\ref{eq:prob_way}) and (\ref{eq:prob_route}), the support is binary ($\{0,1\}$), indicating the presence or absence of a POI $l\in L$ in a (route) trajectory. 
We denote $p_{l,l_{\rm start}}$ and $q_{l, l_{\rm start}}$ as the probability distributions derived from the original and generated datasets, respectively. 
The discrepancy for these metrics is calculated with the averaged JS divergence as shown in the following equation:
$
\frac{1}{|L_{\rm start}|} \sum_{l_{\rm start}\in L_{\rm start}}\frac{1}{|L|}\sum_{l\in L} {\rm JS}(p_{l,l_{\rm start}}||q_{l,l_{\rm start}}).
$

\textbf{Destination, transition, and travel distance:}
For destination and transition metrics, as outlined in probability distributions~(\ref{eq:prob_destination}) and (\ref{eq:prob_transition}), the support encompasses all POIs ($L$). 
Similarly, for the travel distance metric defined in~(\ref{eq:prob_travel}), the support includes all distance ranges ($R$). 
We assign $p_{l_{\rm start}}$ and $q_{l_{\rm start}}$ to represent the probability distributions from the original and generated datasets, respectively. 
The discrepancy for these metrics is computed with the averaged JS divergence as follows:
$
\frac{1}{|L_{\rm start}|} \sum_{l_{\rm start}\in L_{\rm start}} {\rm JS}(p_{l_{\rm start}}||q_{l_{\rm start}}).
$

\textbf{\textcolor{black}{density at $t$:}}
\textcolor{black}{
For density at $t$, the support encompasses all POIs ($L$).
We assign $p_t$ and $q_t$ to represent the probability distributions from the original and generated datasets for time $t$, respectively. 
Then, the discrepancy is as follows:
$
\frac{1}{|T|}\sum_{t\in T}{\rm JS}(p_t||q_t),
$
where $T$ is the set of all discrete times.
}

\textbf{\textcolor{black}{Trajectory density and trajectory pattern}}
\textcolor{black}{
For trajectory density and trajectory pattern, we will obtain the set of counts given by the set of queries $Q$.
Then, we use the average relative error (ARE) to evaluate the similarity between the counts from the real dataset  $D$ and those from the generated dataset $D^\prime$.
$$
{\rm ARE}=\frac{1}{|Q|}\sum_{q\in Q} \frac{\left|{\rm count}(q,D)-{\rm count}(q,D^\prime)\right|}{\max\{{{\rm count}(q,D)},\phi\}},
$$
where $\phi$ is a constant to bound the impact of a query of small real value.
}

\section{Omited experiments}

\begin{figure}[t]
    \centering
    \begin{minipage}{.47\columnwidth}
        \centering
        \includegraphics[width=\linewidth]{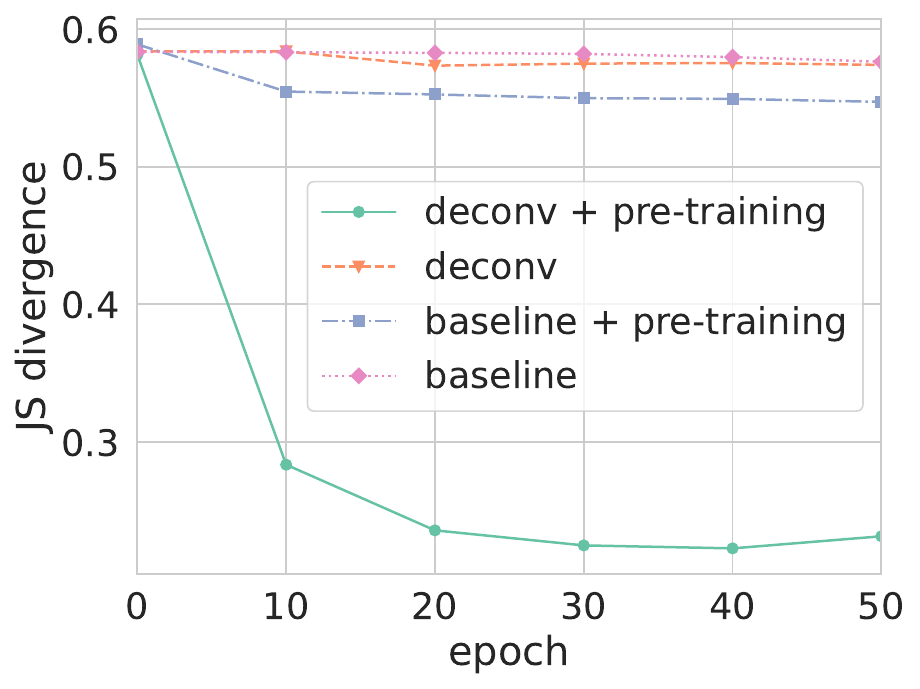}
        \caption{The discrepancy of the destination on the Straight dataset ($w=32$) for each training epoch. \textcolor{black}{deconv means deconvolutional network.}\label{fig:ablation_pre-trainiing}}
    \end{minipage}%
\end{figure}

\subsection{Ablation Study: Pre-training}

Section~\ref{sec:exp_ablation} established that private pre-training significantly enhances performance by helping DP-SGD training. 
However, a question arises: does pre-training merely memorize the DP transition matrix, or does it genuinely aid in DP-SGD training? To address this query, we utilize the Straight dataset for an in-depth analysis.

In our experiment, we employ the first transition ($\Pr(v_1|v_0)$) as the DP transition matrix during the pre-training phase. 
Subsequently, we evaluate the distribution of the final destination ($\Pr(l_2|v_0)$), which remains untrained in the pre-training phase due to the unique structure of the Straight dataset. 
The results of this evaluation are presented in Figure~\ref{fig:ablation_pre-trainiing}.

Our findings indicate that pre-training enhances utility in both scenarios: with only pre-training and when combined with the deconvolutional network. 
This improvement suggests that pre-training contributes more than just memorizing the provided DP transition matrix; it actively supports the training process of DP-SGD. 
Notably, pre-training in conjunction with the deconvolutional network yields more significant benefits compared to pre-training the baseline. 
The reason for this is that while the baseline’s pre-training does not include the location encoding component, the deconvolutional network benefits from pre-training with coarse location encoding, thereby enhancing its overall effectiveness.

\subsection{Privacy Budget Allocation}

\begin{figure}[t]
    \centering
    \begin{minipage}{.47\columnwidth}
        \centering
        \includegraphics[width=\linewidth]{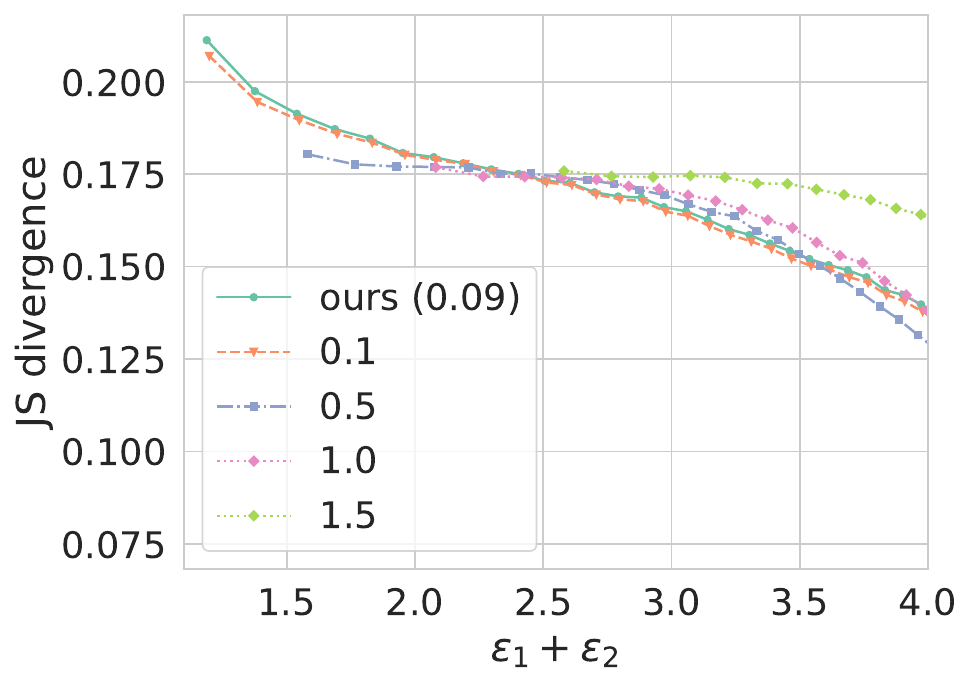}
        \caption*{$w=16$}
    \end{minipage}
    \begin{minipage}{.47\columnwidth}
        \centering
        \includegraphics[width=\linewidth]{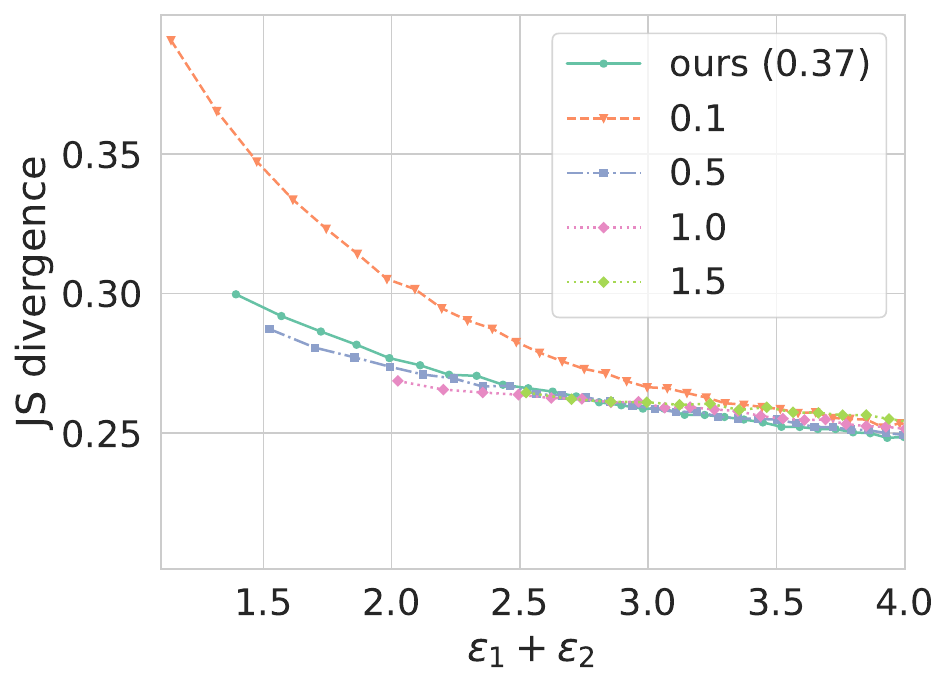}
        \caption*{$w=32$}
    \end{minipage}%
    \begin{minipage}{.47\columnwidth}
        \centering
        \includegraphics[width=\linewidth]{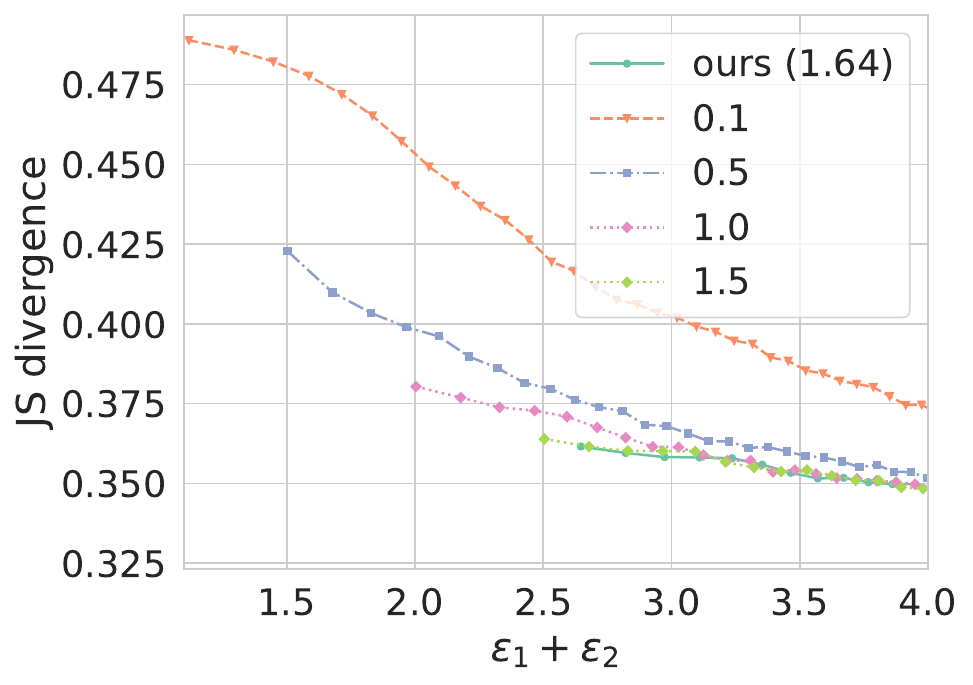}
        \caption*{$w=64$}
    \end{minipage}%
    \caption{\label{fig:budget allocation} Discrepancy in transition on the Peopleflow dataset under three different settings of $w$. Each line illustrates a scenario with a distinct privacy budget used in private pre-training ($\varepsilon_2=0.1, 0.5, 1.0, 1.5$, and the value derived from our allocation method).}
\end{figure}

Determining the allocation of privacy budget between $\varepsilon_1$ and $\varepsilon_2$ is a unique hyperparameter in HRNet. 
In this section, we present experiments on the Peopleflow dataset, demonstrating the impact of privacy budget allocation on performance, as well as the efficacy of our heuristic method proposed in Section~\ref{sec:priv_budget_allocation}. 
As shown in Figure~\ref{fig:budget allocation}, we observe variations in outcomes when altering the budget allocated to private pre-training (i.e., $\varepsilon_2$).

In scenarios where $w$ is small (indicating a lower number of POIs), a small privacy budget $\varepsilon_2$ tends to be more effective. 
This is attributed to the denser nature of the transition matrix in such settings, which does not necessitate a large privacy budget, thereby allowing more budget to be allocated to DP-SGD (i.e., $\varepsilon_1$). 
An example of this can be seen where assigning $\varepsilon_2$ a value of $1.5$ in the $w=16$ setting actually deteriorates results, due to an excessive allocation for the dense transition matrix.

Conversely, in settings where $w$ is large, a large budget for $\varepsilon_2$ is beneficial.
This is because the transition matrix in such instances necessitates a large budget to ensure an adequately high signal-to-noise ratio. 
For instance, setting $\varepsilon_2$ to $0.1$ in the $w=64$ scenario leads to worse results, as the transition matrix becomes too noisy, diminishing its informational value for pre-training.

Our heuristic method adapts automatically to the number of POIs and the number of records, achieving a balanced allocation without seeing the sensitive dataset. 
While this is a heuristic approach, it does not require additional privacy budget to determine its value, making it an efficient and effective strategy for privacy budget allocation in HRNet.



\end{document}